\documentclass[12pt,a4paper,normalheadings,headsepline,headinclude,bibtotoc,fleqn,review]{article}
\usepackage[latin1]{inputenc}
\usepackage{amsmath}
\usepackage{booktabs}
\usepackage{array}
\usepackage{amsthm}
\usepackage{dcolumn}
\usepackage{endnotes}
\usepackage{units}
\usepackage{marvosym}
\usepackage[lines=10]{geometry}
\usepackage{longtable}
\usepackage{rotating}
\usepackage{expdlist}
\usepackage{wrapfig}
\usepackage{expdlist}
\usepackage{amssymb}
\usepackage{gloss}
\usepackage{nomencl}
\usepackage{natbib}

\newtheorem{prop}{Proposition}
\newtheorem{coro}{Corollary}

\setlength{\nomlabelwidth}{.2\hsize}

\setlength{\nomitemsep}{-1\parsep}

    \makegloss

\oddsidemargin 0cm \textwidth 17cm \linespread{1.5} \topmargin 0cm
\textheight 24cm \headheight 0cm \headsep 0cm
\setlength{\footskip}{10mm} \setlength{\headsep}{4mm}
\setcounter{secnumdepth}{4}\setcounter{tocdepth}{4}
\renewcommand{\thesection}{\arabic{section}}

\usepackage{fancyhdr} \cfoot{\thepage}
\rhead{\thesection} \chead{}

\begin{document}

\begin{center}\huge{A Global Game with Heterogenous Priors}\footnote{I thank Martin Hellwig for bringing the literature on global games to my attention.
I thank Sophie Bade, Nataliya Demchenko, Alia Gizatulina, Olga
Gorelkina, Dominik Grafenhofer, Carl Christian von Weizs�cker, and
seminar participants in Bonn for helpful and encouraging
discussions.}\end{center}

\begin{center} \emph{Wolfgang Kuhle}\\\emph{Max Planck Institute for Research on Collective Goods, Kurt-Schumacher-Str. 10, 53113 Bonn, Germany. Email:
kuhle@coll.mpg.de.}
\end{center}

\noindent\emph{\textbf{Abstract:} This paper relaxes the common
prior assumption in the public and private information game of
\citet{Mor01,Mor04}. For the generalized game, where the agent's
prior expectations are heterogenous, it derives a sharp condition
for the emergence of unique/multiple equilibria. This condition
indicates that unique equilibria are played if player's public
disagreement is substantial. If disagreement is small, equilibrium
multiplicity depends on the relative precisions of private signals
and subjective priors. Extensions to environments with public
signals of exogenous and endogenous quality show that prior
heterogeneity, unlike heterogeneity in private information,
provides a robust anchor for unique equilibria. Finally,
irrespective of whether priors are common or not, we show
that public signals can ensure equilibrium uniqueness, rather than multiplicity, if they are sufficiently precise.}\\
\textbf{Keywords: Global Games, Equilibrium Selection, Heterogenous Priors, Thin out Effect}\\
\textbf{JEL: D53, D82, D83}

\section{Introduction}\label{sg1}

Heterogeneity in private beliefs is the core component of global
coordination games. In the original two-player games introduced by
\citet{Car93} and \citet{Rub89}, as well as in the \citet{Mor98}
extension with a continuum of players, it is a small perturbation
away from common knowledge, which selects unique equilibria. This
pivotal role of private beliefs was put into perspective by
\citet{Mor01,Mor04}, \citet{Hel02}, and \citet{Met02}, who
introduce common priors and public signals into the global games
model. In such extended settings, it turns out that the global
game structure, and its inductive equilibrium
selection mechanism, is distorted. 
And it depends on the relative precision of public and private
information whether agents can coordinate on multiple equilibria.

Given that the dispersion of private beliefs is pivotal to the
global games approach, it is complementary to study how a
relaxation of the common prior assumption, which adds a new
dimension to the dispersion of private beliefs, affects
equilibrium outcomes. To address this question, we study
coordination games which are played by a continuum of agents that
differ with regard to (i) private information and (ii) the mean of
their prior over an unknown fundamental that characterizes the
game's payoffs. While the heterogeneity in private signals is
unobservable to players, players will know the distribution of
priors. That is, we study how agents can coordinate on equilibria
in environments, where they know that others hold, and act upon,
views that they commonly know to disagree with. The main result
shows that if player's public disagreement regarding the
coordination game's payoff structure is substantial, then they
play unique equilibria. That is, unique equilibria can be selected
through the dispersion of priors.

We derive our results in two steps. First, we isolate the pure
interaction between private signals and dispersed priors in an
environment which abstracts from public signals. Second, we add
three different types of public signals to examine the robustness
of our results. In the first environment, which abstracts from
public signals, we find that \emph{heterogeneity in beliefs, which
originates from the dispersion of prior expectations $\mu$,
contributes to equilibrium uniqueness. Dispersion in private
signals, on the contrary,
induces equilibrium multiplicity.} 
In the extended settings, where we introduce different types of
public signals, we find that prior heterogeneity still
unambiguously contributes towards equilibrium uniqueness. The role
of private signals, on the contrary, is ambiguous and depends
critically on the specification of the public signal. The main
finding in the second part of the paper is therefore that
\emph{heterogeneity in priors, unlike heterogeneity in private
information, provides a robust anchor for unique equilibria}. Put
differently, we show that heterogeneous priors robustly select
unique equilibria in rather diverse environments. In turn, we
compare this finding to the literature where the presence of
common priors induces multiple rather than unique equilibria.

Finally, as a byproduct of our analysis, we find that public
signals in themselves, irrespective of whether priors are
heterogenous or common, have an ambiguous effect on equilibrium
multiplicity: increases in the public signal's precision can
ensure equilibrium uniqueness. This result is of interest in
comparison with those of \citet{Mor01,Mor04}, \citet{Hel02},
\citet{Met02}, \citet{Ang06} who find that increases in the public
signal's precision unambiguously induce multiple rather than
unique equilibria.\footnote{There will be two classes of
equilibria in this setting, and the public signal's precision
ensures uniqueness \emph{for every given signal realization}.
However, if the public signal's precision is sufficiently high,
there may be multiplicity in the public signal's realization
itself.}

Compared to the literature, we extend the canonic common prior
coordination games by \citet{Mor01,Mor04} to a setting where
agents' prior distribution's mean $\mu$, regarding the true state
of the world, are normally distributed over the economy's
population. In the model of \citet{Mor01,Mor04}, equilibrium
uniqueness is ensured iff
$\frac{\sqrt{\alpha_x}}{\alpha_p}\geq\frac{1}{\sqrt{2\pi}}$; where
$\alpha_x$ is the private signal's precision and $\alpha_p$ is the
prior's precision. For the generalized game, where subjective
prior expectations $\mu$ are distributed with variance
$\sigma_{\mu}$, we obtain
$\sqrt{(\frac{\sqrt{\alpha_x}}{\alpha_p})^2+\sigma^2_{\mu}}\geq\frac{1}{\sqrt{2\pi}}$
as an analog condition for equilibrium uniqueness. This formula's
main strength is that it isolates how prior heterogeneity
$\sigma_{\mu}$ affects the number of possible equilibria. In
general, the condition shows that incomplete information games
where the population of players is polarized such that there are
two large tails have unique equilibria. Or, equivalently, unique
equilibria are ensured if there is only a small group of agents
which have ``intermediate" or ``moderate" beliefs. Put
differently, increases in the priors dispersion, $\sigma_{\mu}$,
``thin out" the group of agents with ``moderate" beliefs which
can, potentially, coordinate on multiple equilibria. Technically,
equilibrium multiplicity depends on the relative precisions of
private information $(\alpha_x)$ and the subjective prior
$(\alpha_p)$ if the prior's dispersion is small
$(\sigma_{\mu}<\frac{1}{\sqrt{2\pi}})$. For sufficiently dispersed
priors, $(\sigma_{\mu}>\frac{1}{\sqrt{2\pi}})$, equilibria are
unique, irrespective of the relative precisions of private signals
and priors. The original uniqueness condition for the common prior
economy
$(\frac{\sqrt{\alpha_x}}{\alpha_p}\geq\frac{1}{\sqrt{2\pi}})$
obtains as prior dispersion $(\sigma_{\mu})$ vanishes.

In Section \ref{s45}, we extend the information structure and
embed our coordination game into three different public signal
environments. These extensions provide a background to study the
fundamental difference between heterogeneity in priors and private
signals. The main finding in this section is that \emph{only
heterogeneity in priors selects unique equilibria reliably}. On
the contrary, the role of both private and public signals changes
from environment to environment. These findings originate, first
of all, from an environment where the public signal reveals the
true state of the game with exogenous quality; secondly, from an
environment that may be seen as a financial markets context, where
stock prices aggregate and reveal dispersed private information
with endogenous quality; and thirdly, from an environment with
public signals that allow agents to observe each other's actions.
Comparison of all three public signal settings shows that the
prior's dispersion $\sigma_{\mu}$ is the only parameter which can
ensure unique equilibria in all environments. The implications of
the public and private signals' precisions, however, vary from
case to case. In particular, increases in the private signal's
precision ensure uniqueness (multiplicity) when public signals are
of exogenous (endogenous) precision.

Recently, \citet{Ste08}, \citet{Izm10}, and \citet{Mat12} have
introduced heterogenous priors into global games. \citet{Ste08},
\citet{Izm10}, and \citet{Mat12} focus, respectively, on learning,
rationalizability of strategies, and applications to mechanism
design. Unlike \citet{Ste08} and \citet{Mat12}, who focus on
N-player games, we study games with a continuum of players as in
\citet{Izm10}. Contrary to \citet{Izm10}, who focus on the private
information limit\footnote{In the ``private information limit",
the private signal's precision goes to infinity. A crucial
consequence of this is that the importance of priors for the
agent's decisions vanishes.} of \citet{Car93}, which carries over
to these more general games, the present paper develops a sharp
characterization of equilibrium multiplicity at and away from the
private information limit.\footnote{\citet{Izm10} discuss prior
heterogeneity in a game with a continuum of players.
\citet{Izm10}, p. 25, focus on the formation of individual
threshold strategies and avoid the ``delicate" question of
equilibrium multiplicity, which is the focus of the present paper.
More precisely, \citet{Izm10} study games with heterogenous priors
(``sentiments") and characterize for a family of coordination
games how prior dispersion affects the unique threshold
equilibrium which is ensured once private information becomes
sufficiently precise. Due to the focus on the private information
limit, it remains an open question how prior heterogeneity affects
the number of equilibria in those cases where private signals are
not arbitrarily precise. This question is the focus of the current
paper. Finally, while \citet{Izm10} characterize games with
general distribution functions, the present characterization of
equilibrium uniqueness and multiplicity, away from the limit where
private information is very precise, relies on the afore mentioned
normality assumptions.} The present paper is therefore the first
to derive closed form expressions for the emergence of multiple
equilibria in large games with heterogeneous priors. Moreover, by
means of the public signal environments, it provides a framework
to interpret the role of prior heterogeneity in the global games
structure.\footnote{\citet{Ste08}, \citet{Izm10}, and
\citet{Mat12} abstract from public signals.}

Regarding the heterogeneity in priors, we note that
\citet{Mor98,Mor01,Mor04} embed their common prior incomplete
information coordination games in a currency crises, bank-run and
debt-run context respectively. In these contexts, our assumption
of a prior distribution
$\mu_i\sim\mathcal{N}(E[\mu],\sigma^2_{\mu})$, which is known to
all agents, may be interpreted as a publicly observable
distribution of exchange rate forecasts or credit ratings. Such a
distribution of conflicting beliefs, may also be seen as
consistent with publicly disclosed long and short positions that
large investors take in a firm's stock or debt. While
\citet{Mor95} and \citet{Set12}, discuss heterogenous priors and
their origins in detail, we will take prior heterogeneity as a
given. In turn, we focus purely on the implications that public
disagreement has for the emergence of multiple equilibria in the
global games framework.


The rest of the paper is organized as follows. In Section
\ref{s2}, we introduce the model. Section \ref{s3} contains the
main result. In Section \ref{s45}, we reflect on our findings in
three distinct public signal environments. Section \ref{s5}
concludes.



\section{The Model}\label{s1}

To isolate the implications of prior dispersion, we start with a
setting that differs only with regard to the heterogeneity of
priors from the canonic framework introduced by
\citet{Mor01,Mor04}. 

\subsection{Strategies, Payoffs and Information}\label{s2} There is a
status quo and a unit measure of agents indexed by $i\in[0,1]$.
Each of these agents $i$ can choose between two actions
$a_i\in\{0,1\}$. Choosing $a_i=1$ means to attack the status quo.
Choosing $a_i=0$ means that the agent does not attack the status
quo. An attack on the status quo is associated with a cost
$c\in(0,1)$. If the attack is successful, the status quo is
abandoned and attacking agents receive a net payoff $1-c>0$. If
the attack is not successful, an attacking agent's net payoff is
$-c$. The payoff for an agent who does not attack is normalized to
zero.

The status quo is abandoned if the aggregate size of the attack
$A:=\int_0^1a_idi$ exceeds the strength of the status quo
$\theta$, i.e., if $A>\theta$. Otherwise, if $A<\theta$, the
status quo is maintained and the attack fails. Regarding the
fundamental, $\theta$, each agent $i$ holds a prior belief
$\theta\sim\mathcal{N}(\mu_i,\sigma^2_p)$. In our setting, we
assume that the priors, regarding the fundamental $\theta$, are
normally distributed across the population, i.e.,
$\mu_i\sim\mathcal{N}(E[\mu],\sigma^2_{\mu})$. And the
distribution of priors is common knowledge. In addition to the
prior, each agent receives a private signal
$x_i=\theta+\sigma_{x}\xi_i$. Where signal noise
$\xi_i\sim\mathcal{N}(0,1)$ is i.i.d. across the population. We
parameterize the information structure in terms of precisions
$\alpha_x\equiv\frac{1}{\sigma_x^2}$ and
$\alpha_p\equiv\frac{1}{\sigma_p^2}$. Agents use their information
to calculate expected utility
\begin{eqnarray} E[U(A,\theta,c,a_i)|\mu_i,x_i]=a_i(1\times P(\theta <A|\mu_i,x_i)-c).     \label{1} \end{eqnarray}
Where $P(\theta<A|\mu_i,x_i)$ is the probability that the attack
is successful given signal $x_i$ and prior belief $\mu_i$.
Regarding payoffs, (\ref{1}) reflects that an agent $i$, who does
not attack $(a_i=0)$ receives a safe payoff of $0$, while an agent
who does attack $(a_i=1)$ receives $1-c$ if the attack is
successful, and $-c$ otherwise. To close the description of
agents' information, we note that the distributions of $x_i$ and
$\mu_i$
are commonly known to all players. 
With the private choice problem in place, we now characterize
equilibrium.


\subsection{Equilibrium}\label{s3}

As \citet{Mor98,Mor04}, we focus on monotone threshold equilibria.
Each threshold equilibrium will be characterized by a pair
$\psi^*,\theta^*$. Where $\theta^*$ separates values
$\theta<\theta^*$ for which the status quo is abandoned from
values $\theta>\theta^*$ where it prevails. The threshold level
$\psi^*\equiv\frac{\alpha_x}{\alpha}x^*+\frac{\alpha_p}{\alpha}\mu^*$
summarizes the critical pairs $x^*,\mu^*$, for which agents are
just indifferent between attacking and not attacking.\footnote{In
the context of the currency crises model of \citet{Mor98},
attacking agents would sell short a country's currency, and the
central bank's reserves $\theta$ are either sufficient
$(\theta>A)$ to defend the peg  or not $(\theta<A)$. In another
interpretation, agents can run on/sell short a firm's debt, and if
the firm's financial strength is insufficient it defaults.} The
equilibrium pairs, $\psi^*,\theta^*$, are determined by the
simultaneous evaluation of the \emph{payoff indifference
condition} and the \emph{critical mass condition}. The
\emph{payoff indifference condition}, PIC, follows directly from
the individual choice problem (\ref{1}). Taking the threshold
level $\theta^*$ as given, we have:\footnote{See \citet{Rai61}, p.
250, for the standard results on prior and posterior distributions
of normally distributed variables which are used throughout the
paper.}\footnote{Note that we already use the (forthcoming)
\emph{critical mass condition}, (\ref{cmc11}), which requires that
$\theta^*\equiv A(\psi^*,\theta^*)$, and replace $A$ with
$\theta^*$ in (\ref{1}) to obtain (\ref{pic11}).}
\begin{eqnarray} P(\theta\leq\theta^*|x^*,\mu^*)=c\quad\Leftrightarrow \quad
\Phi\Big(\sqrt{\alpha}(\theta^*-\frac{\alpha_x}{\alpha}x^*-\frac{\alpha_p}{\alpha}\mu^*)\Big)=c;
\quad \alpha=\alpha_x+\alpha_p, \label{pic11}\end{eqnarray} where
$\Phi()$ is the cumulative of the standard normal distribution.
The PIC (\ref{pic11}) characterizes those pairs $x^*,\mu^*$ which
are such that the agent is indifferent between attacking and not
attacking. To work with Condition (\ref{pic11}), we define
$\psi\equiv\frac{\alpha_x}{\alpha}x+\frac{\alpha_p}{\alpha}\mu$
such that (\ref{pic11}) writes:
\begin{eqnarray} P(\theta\leq\theta^*|\psi^*)=c\quad\Leftrightarrow \quad
\Phi\Big(\sqrt{\alpha}(\theta^*-\psi^*)\Big)=c; \quad
\alpha=\alpha_x+\alpha_p. \label{pic12}\end{eqnarray} And agents
who receive $\psi\leq\psi^*$, which is evidence of a weak
fundamental, attack. Agents who receive $\psi>\psi^*$ do not
attack since they believe in a strong fundamental, which makes a
successful attack unlikely.

The \emph{critical mass condition}, CMC, takes the cutoff value
$\psi^*$ as given and determines the threshold $\theta^*$, where
the attack is just strong enough to overwhelm the status quo. To
calculate the mass of attacking agents, we note that
$\psi|\theta\sim\mathcal{N}(\frac{\alpha_x}{\alpha}\theta+\frac{\alpha_p}{\alpha}E[\mu],\alpha_{\psi}^{-1})$
where
$\alpha_{\psi}\equiv\frac{\alpha^2}{\alpha_x+\alpha^2_p\sigma^2_{\mu}}$
and $\alpha=\alpha_x+\alpha_p$. The CMC therefore writes:
\begin{eqnarray} A(\psi^*,\theta^*)\equiv P(\psi<\psi^*|\theta^*)=\theta^*\quad
\Leftrightarrow\quad
\Phi\Big(\sqrt{\alpha_{\psi}}(\psi^*-\frac{\alpha_x}{\alpha}\theta^*-\frac{\alpha_p}{\alpha}E[\mu])\Big)=\theta^*.\label{cmc11}\end{eqnarray}
Again, $\Phi()$ is the cumulative of the standard normal
distribution. Regarding (\ref{cmc11}), we note that it implies
that there exists only one $\psi^*$ for every $\theta^*$.
\emph{Simultaneous} evaluation of (\ref{pic12}) and (\ref{cmc11})
yields threshold equilibria $\psi^*,\theta^*$:

\begin{prop}\label{p1}The equilibrium $\psi^*,\theta^*$ is unique, for all parameter pairs, if and only
if\\
$\sqrt{(\frac{\sqrt{\alpha_x}}{\alpha_p})^2+\sigma^2_{\mu}}\geq\frac{1}{\sqrt{2\pi}}$.\end{prop}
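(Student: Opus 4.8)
The plan is to collapse the two equilibrium conditions into a single fixed-point equation in $\theta^*$ and then count its solutions. First I would solve the PIC (\ref{pic12}) for the cutoff, $\psi^*=\theta^*-\Phi^{-1}(c)/\sqrt{\alpha}$, and substitute this into the CMC (\ref{cmc11}). Using $\tfrac{\alpha_x}{\alpha}+\tfrac{\alpha_p}{\alpha}=1$, the term $\theta^*-\tfrac{\alpha_x}{\alpha}\theta^*$ collapses to $\tfrac{\alpha_p}{\alpha}\theta^*$ and one is left with the single equation
\[
\Phi\!\left(\sqrt{\alpha_{\psi}}\Big(\tfrac{\alpha_p}{\alpha}(\theta^*-E[\mu])-\tfrac{\Phi^{-1}(c)}{\sqrt{\alpha}}\Big)\right)=\theta^*.
\]
Writing the left-hand side as $g(\theta^*)$, equilibria are exactly the fixed points of $g$, i.e. the intersections of the strictly increasing, $S$-shaped curve $g$ with the $45^\circ$ line.

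The key observation is that uniqueness is governed by the steepest slope of $g$. Since $g'(\theta)=\phi(\cdot)\,\sqrt{\alpha_{\psi}}\,\tfrac{\alpha_p}{\alpha}$ and the standard normal density $\phi$ is maximized at $0$ with $\phi(0)=1/\sqrt{2\pi}$, the maximal slope is $\max_\theta g'(\theta)=\tfrac{1}{\sqrt{2\pi}}\sqrt{\alpha_{\psi}}\,\tfrac{\alpha_p}{\alpha}$. If this maximum does not exceed $1$, then $h(\theta):=g(\theta)-\theta$ is decreasing, in fact strictly decreasing away from the single inflection point; together with $h(\theta)\to+\infty$ as $\theta\to-\infty$ and $h(\theta)\to-\infty$ as $\theta\to+\infty$, and the boundary values $h(0)=g(0)>0$, $h(1)=g(1)-1<0$, this yields exactly one root in $(0,1)$, for \emph{every} $c$ and $E[\mu]$. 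Substituting $\alpha_{\psi}=\alpha^2/(\alpha_x+\alpha_p^2\sigma_\mu^2)$ gives $\max_\theta g'=\tfrac{1}{\sqrt{2\pi}}\,\alpha_p/\sqrt{\alpha_x+\alpha_p^2\sigma_\mu^2}$, and the requirement $\max_\theta g'\le 1$ rearranges, after isolating $\alpha_p$ and taking reciprocals, precisely into the stated inequality $\sqrt{(\sqrt{\alpha_x}/\alpha_p)^2+\sigma_\mu^2}\ge 1/\sqrt{2\pi}$.

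For the converse I would argue by construction. If the inequality fails, then $\max_\theta g'>1$, and I would exploit the free parameter $c$ to centre the inflection of $g$ at $\theta=1/2$: choosing $c$ so that the argument of $\Phi$ vanishes at $\theta=1/2$ (always possible, since $\Phi^{-1}$ maps onto $\mathbb{R}$) makes $\theta=1/2$ a fixed point of $g$ at which $g'(1/2)>1$. Because $h(1/2)=0$ with $h'(1/2)>0$, while $h(0)=g(0)>0$ and $h(1)=g(1)-1<0$, the intermediate value theorem produces one root in $(0,1/2)$ and another in $(1/2,1)$, so three equilibria coexist. I expect the main obstacle to be exactly this only-if direction: one must verify not merely that $g$ is somewhere steeper than the line, but that its wiggle can be positioned so that all three crossings fall inside the admissible range $\theta^*\in(0,1)$, and that the quantifier is the intended one, namely that uniqueness \emph{for all} $c,E[\mu]$ fails as soon as the inequality on $(\alpha_x,\alpha_p,\sigma_\mu)$ is violated. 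The boundary evaluations $h(0)>0$ and $h(1)<0$, which hold for all parameter values, are what pin the crossings into $(0,1)$ and close this gap.
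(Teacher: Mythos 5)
Your proposal is correct and follows essentially the paper's own route: you reduce the PIC and CMC to the single fixed-point equation (\ref{comp}) and bound the maximal slope, $\sqrt{\alpha_{\psi}}\,\tfrac{\alpha_p}{\alpha}\,\phi(\cdot)\leq\tfrac{1}{\sqrt{2\pi}}\,\tfrac{\alpha_p}{\sqrt{\alpha_x+\alpha_p^2\sigma_{\mu}^2}}$, by $1$, which is exactly the paper's condition (\ref{2}) rearranged via $\phi(0)=1/\sqrt{2\pi}$ instead of taking logarithms. Your explicit construction for the converse---choosing $c$ so that $\theta^*=\tfrac{1}{2}$ is a fixed point at the inflection where $g'>1$, which together with $h(0)>0$ and $h(1)<0$ forces three crossings in $(0,1)$---correctly supplies the ``only if'' direction, which the paper handles only implicitly through the ``for all parameter pairs'' quantifier and its footnote remark on complex roots.
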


\begin{proof}We solve (\ref{pic12}) for the threshold level
$\psi^*=-\Phi^{-1}(c)\frac{1}{\sqrt{\alpha}}+\theta^*$, and
substitute $\psi^*$ into (\ref{cmc11}) to obtain a one-dimensional
equation in $\theta^*$:
\begin{eqnarray} \Phi\Big(\sqrt{\alpha_{\psi}}\Big(-\Phi^{-1}(c)\frac{1}{\sqrt{\alpha}}+\theta^*
-\frac{\alpha_x}{\alpha}\theta^*-\frac{\alpha_p}{\alpha}E[\mu]\Big)\Big)=\theta^*.\label{comp}\end{eqnarray}
The sufficient condition for uniqueness of the threshold
$\theta^*$ is therefore:
\begin{eqnarray}\sqrt{\alpha_{\psi}}\frac{\alpha_p}{\alpha}\phi\Big(\sqrt{\alpha_{\psi}}\Big(-\Phi^{-1}(c)\frac{1}{\sqrt{\alpha}}
+\frac{\alpha_p}{\alpha}\theta^*-\frac{\alpha_p}{\alpha}E[\mu]\Big)\Big)\leqq1\nonumber\\
\Leftrightarrow
\sqrt{\alpha_{\psi}}\frac{\alpha_p}{\alpha}\frac{1}{\sqrt{2\pi}}e^{-\Big(\sqrt{\alpha_{\psi}}\Big(-\Phi^{-1}(c)\frac{1}{\sqrt{\alpha}}
+\frac{\alpha_p}{\alpha}\theta^*
-\frac{\alpha_p}{\alpha}E[\mu]\Big)\Big)^2\frac{1}{2}}\leqq1.
\label{2}\end{eqnarray} Finally, we recall that
$\alpha_{\psi}=\frac{\alpha^2}{\sqrt{\alpha_x+\alpha^2_p\sigma^2_{\mu}}}$
and take logarithms to obtain: \begin{eqnarray}
ln\Big(\frac{1}{\sqrt{2\pi}}
\frac{1}{\sqrt{(\frac{\sqrt{\alpha_x}}{\alpha_p})^2+\sigma^2_{\mu}}}\Big)\leqq\Big(\sqrt{\alpha_{\psi}}\Big(-\Phi^{-1}(c)\frac{1}{\sqrt{\alpha}}
+\frac{\alpha_p}{\alpha}\theta^*
-\frac{\alpha_p}{\alpha}E[\mu]\Big)\Big)^2\frac{1}{2}.\nonumber
\end{eqnarray}Accordingly, unique equilibria are ensured once:\footnote{That is, if (\ref{0}) holds, then (\ref{2}) never holds
with equality for real-valued $\theta^{*}$'s. Put differently, the
polynomial which characterizes those values $\theta^*$, for which
(\ref{2}) holds with equality, has two complex roots.}
\begin{eqnarray} ln\Big(\frac{1}{\sqrt{2\pi}}
\frac{1}{\sqrt{(\frac{\sqrt{\alpha_x}}{\alpha_p})^2+\sigma^2_{\mu}}}\Big)\leq0
\quad\Leftrightarrow\quad
\sqrt{\Big(\frac{\sqrt{\alpha_x}}{\alpha_p}\Big)^2+\sigma^2_{\mu}}\geq\frac{1}{\sqrt{2\pi}}.\label{0}
\end{eqnarray}
\end{proof}

In one interpretation, the role of the prior's dispersion
$\sigma_{\mu}$ in Condition (\ref{0}) indicates that polarized
economies, with two large groups (tails) which believe that the
status quo is going to be maintained (abandoned), have unique
equilibria. Put differently, increases in the prior's dispersion
$\sigma_{\mu}$ ``thin out" the group of agents, around the mean
$E[\mu]$ who hold ``moderate" beliefs. And it is this group which
can potentially coordinate on multiple equilibria. Regarding the
prior's weight, $\alpha_p$, we find that, for every given prior
expectation $\mu$, increases in $\alpha_p$ make actions more
predictable. This allows agents to coordinate, as in the common
prior economy of \citet{Mor04}, which contributes towards
equilibrium multiplicity. Comparison of the two \emph{origins} of
belief heterogeneity indicates that increased dispersion of
private signals contributes towards equilibrium multiplicity,
while increases in the dispersion of prior beliefs $\mu$
contributes towards uniqueness. In the next section, we show that
the thinning-out effect is robust to various changes in the
informational environment. And it is the only avenue that can
ensure unique equilibria in all three public signal environments
that follow.

From a technical perspective, we note that Proposition \ref{p1}
has two corollaries:

\begin{coro} If the prior $\mu$ is sufficiently dispersed, such that $\sigma_{\mu}>\frac{1}{\sqrt{2\pi}}$,
equilibria are unique irrespective of the relative precision
$\frac{\sqrt{\alpha_x}}{\alpha_p}$ of private signal and prior.
And in the case where the private signal is uninformative,
$\alpha_x=0$, unique (multiple) equilibria exist if
$\sigma_{\mu}>(<)\frac{1}{\sqrt{2\pi}}$.
\end{coro}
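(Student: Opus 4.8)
The plan is to read both assertions directly off the sharp characterization $\sqrt{(\frac{\sqrt{\alpha_x}}{\alpha_p})^2+\sigma^2_{\mu}}\geq\frac{1}{\sqrt{2\pi}}$ of Proposition \ref{p1}, using crucially that it is an \emph{if and only if} statement, so that its success pins down uniqueness and its failure pins down multiplicity. No new fixed-point analysis is needed; the work is to specialise the condition correctly in each case.

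For the first assertion I would note that the expression under the root splits into a non-negative relative-precision term $(\frac{\sqrt{\alpha_x}}{\alpha_p})^2\geq 0$ and the dispersion term $\sigma^2_{\mu}$. Hence, for every admissible $(\alpha_x,\alpha_p)$, one has $\sqrt{(\frac{\sqrt{\alpha_x}}{\alpha_p})^2+\sigma^2_{\mu}}\geq\sqrt{\sigma^2_{\mu}}=\sigma_{\mu}$. Therefore, as soon as $\sigma_{\mu}>\frac{1}{\sqrt{2\pi}}$, the left-hand side of the uniqueness condition is bounded below by $\sigma_{\mu}>\frac{1}{\sqrt{2\pi}}$, so the condition holds whatever the value of $\frac{\sqrt{\alpha_x}}{\alpha_p}$. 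The sufficiency (``if'') direction of Proposition \ref{p1} then delivers a unique equilibrium, irrespective of the relative precision.

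For the second assertion I would set $\alpha_x=0$, first checking that the degenerate case remains well posed: with $\alpha_x=0$ we get $\alpha=\alpha_p$, the aggregator collapses to $\psi=\mu$, and $\alpha_{\psi}=\alpha_p^2/(\alpha_p^2\sigma^2_{\mu})=1/\sigma^2_{\mu}$, consistent with $\psi=\mu\sim\mathcal{N}(E[\mu],\sigma^2_{\mu})$; in particular the denominator $\alpha_p^2\sigma^2_{\mu}$ survives, so no $0/0$ indeterminacy arises and equations (\ref{comp}) and (\ref{2}) carry over. Substituting $\alpha_x=0$ annihilates the relative-precision term, and the condition of Proposition \ref{p1} reduces to $\sigma_{\mu}\geq\frac{1}{\sqrt{2\pi}}$. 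Uniqueness for $\sigma_{\mu}>\frac{1}{\sqrt{2\pi}}$ is the ``if'' direction, while multiplicity for $\sigma_{\mu}<\frac{1}{\sqrt{2\pi}}$ is the ``only if'' direction: when the sharp condition fails, the slope-equals-one equation $\sqrt{\alpha_{\psi}}\frac{\alpha_p}{\alpha}\phi(\cdot)=1$ underlying (\ref{2}) acquires real roots, so the $S$-shaped left-hand side of (\ref{comp}) can cross the diagonal more than once for suitable $c,E[\mu]$. The computation is routine; the only point deserving attention is verifying the $\alpha_x=0$ limit, and then correctly invoking \emph{both} implications of the ``if and only if'' in Proposition \ref{p1} so that the multiplicity half of the statement is actually justified rather than merely the uniqueness half.
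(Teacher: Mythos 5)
Your proposal is correct and takes essentially the same route as the paper, which offers no separate proof and treats the corollary as a direct specialisation of Proposition \ref{p1}: the bound $\sqrt{(\frac{\sqrt{\alpha_x}}{\alpha_p})^2+\sigma^2_{\mu}}\geq\sigma_{\mu}$ settles the first claim, and setting $\alpha_x=0$ settles the second. Your additional checks --- that the $\alpha_x=0$ case is well posed (indeed $\alpha_{\psi}=1/\sigma^2_{\mu}$, consistent with $\psi=\mu$) and that the multiplicity half requires the ``only if'' direction of the proposition, i.e.\ real roots of the slope-equals-one equation and a triple crossing for suitable $c,E[\mu]$ --- are sound and, if anything, more careful than the paper's implicit argument.
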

\begin{coro} The uniqueness condition $\sqrt{(\frac{\sqrt{\alpha_x}}{\alpha_p})^2+\sigma^2_{\mu}}\geq\frac{1}{\sqrt{2\pi}}$ converges smoothly to
the uniqueness condition,
$\frac{\sqrt{\alpha_x}}{\alpha_p}\geq\frac{1}{\sqrt{2\pi}}$, of
the \citet{Mor04} common prior game, as
$\sigma_{\mu}\rightarrow0$.
\end{coro}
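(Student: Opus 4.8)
The plan is to read the uniqueness condition of Proposition~\ref{p1} as an inequality whose right-hand side is the fixed constant $\frac{1}{\sqrt{2\pi}}$ and whose left-hand side is a function of the prior dispersion $\sigma_\mu$, and then to show that this left-hand side converges differentiably to the left-hand side $\frac{\sqrt{\alpha_x}}{\alpha_p}$ of the \citet{Mor04} common-prior condition as $\sigma_\mu\rightarrow0$. Concretely, I would fix $\alpha_x\geq0$ and $\alpha_p>0$ and define $f(\sigma_\mu):=\sqrt{(\frac{\sqrt{\alpha_x}}{\alpha_p})^2+\sigma_\mu^2}$, the left-hand side of Condition (\ref{0}). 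Since $\sigma_\mu$ is a standard deviation, the admissible domain is $\sigma_\mu\geq0$, so the relevant limit is the one-sided limit $\sigma_\mu\downarrow0$.

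First I would establish the limit itself. The radicand $(\frac{\sqrt{\alpha_x}}{\alpha_p})^2+\sigma_\mu^2$ is a polynomial in $\sigma_\mu$, hence continuous, and the square root is continuous on $[0,\infty)$, so $f$ is continuous and $\lim_{\sigma_\mu\downarrow0}f(\sigma_\mu)=\sqrt{(\frac{\sqrt{\alpha_x}}{\alpha_p})^2}=\frac{\sqrt{\alpha_x}}{\alpha_p}$, where the last step uses $\frac{\sqrt{\alpha_x}}{\alpha_p}\geq0$. Because the right-hand side $\frac{1}{\sqrt{2\pi}}$ does not depend on $\sigma_\mu$, the inequality $f(\sigma_\mu)\geq\frac{1}{\sqrt{2\pi}}$ converges to $\frac{\sqrt{\alpha_x}}{\alpha_p}\geq\frac{1}{\sqrt{2\pi}}$, which is exactly the uniqueness condition of the common-prior game.

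To capture the word ``smoothly'' I would strengthen continuity to differentiability. Differentiating gives $f'(\sigma_\mu)=\sigma_\mu/f(\sigma_\mu)$. When $\alpha_x>0$ the radicand is bounded away from zero in a neighbourhood of $\sigma_\mu=0$, so $f$ is $C^\infty$ there and $f'(\sigma_\mu)\rightarrow0$; the boundary between the uniqueness and multiplicity regions therefore approaches its common-prior location with vanishing slope, i.e.\ without a kink. The hard (indeed the only delicate) part is the degenerate case $\alpha_x=0$: there $f(\sigma_\mu)=\sqrt{\sigma_\mu^2}=|\sigma_\mu|$, whose square root fails to be differentiable across the origin. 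I expect to dispose of this by noting that $\sigma_\mu\geq0$ is forced by its interpretation as a dispersion, so on the admissible domain $f(\sigma_\mu)=\sigma_\mu$, which is smooth from the right with right-derivative $1$. In both cases the one-sided limit is all that is required, and the convergence of the uniqueness condition is smooth, completing the proof.
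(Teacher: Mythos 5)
Your proof is correct and takes the only natural route: the paper itself offers no argument for this corollary, treating it as immediate from the form of Condition (\ref{0}), and your continuity-of-the-left-hand-side argument is exactly that implicit reasoning, carefully spelled out. Your additional attention to differentiability and to the degenerate case $\alpha_x=0$ (where only the one-sided limit $\sigma_\mu\downarrow0$ is meaningful) goes slightly beyond what the paper asserts, but is consistent with it and harmless.
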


\section{Public Signals}\label{s45}

To reflect on the role of prior heterogeneity, we introduce three
different types of public signals into our baseline setting. Each
of these public signals is chosen to isolate particular
differences between heterogeneity in priors and heterogeneity in
private signals. The main finding in this section is that only
heterogeneity in priors selects unique equilibria reliably. That
is, if player's disagreement, as measured by $\sigma_{\mu}$, is
substantial, then they play unique equilibria irrespective of the
particular public signal context. That is, the result from the
previous section, i.e., that prior heterogeneity induces
equilibrium uniqueness through the thinning-out effect, is robust
to the introduction of public signals. On the contrary, the role
the private signal will change from environment to environment.

The first public signal, which we introduce into our baseline
model from Section \ref{s1}, is of exogenous quality. That is, it
reveals the true fundamental of the game with exogenous precision
$\alpha_z$. In such an extended setting, we find that the
comparative statics of the subjective prior's precision $\alpha_p$
change: contrary to the baseline setting, where increases in the
prior weight always contribute towards multiplicity, we find that
increases in the prior weight can now shift the modified economy
from multiplicity towards uniqueness. In a second step, we
endogenize the quality of the public signal. To do so, we embed
our coordination game into a financial markets context, where a
stock price aggregates and reveals dispersed private information.
In this setting, we find that the role of private information,
with respect to equilibrium multiplicity, is reversed. Namely,
equilibrium multiplicity is ensured in the limit where private
information becomes arbitrarily precise. On the contrary, the role
of prior dispersion is robust to such changes in the model
structure.

In the last Section \ref{sec1}, we introduce a public signal which
partially reveals the aggregate attack $A$. This signal provides
an environment where changes in the prior's dispersion may, for
intermediate values, induce multiplicity rather than uniqueness.
However, in the limit where prior dispersion grows large, it still
ensures unique equilibria. Finally, as a byproduct of our analysis
in Section \ref{sec1}, we find that public signals in themselves
have an ambiguous effect on equilibrium multiplicity: sufficiently
precise public signals can ensure unique threshold equilibria.
This finding is of independent interest in comparison with the
games of \citet{Mor01,Mor04}, \citet{Hel02}, \citet{Met02},
\citet{Ang06}, where increases in the public signal's precision
unambiguously induce multiple rather than unique threshold
equilibria.

\paragraph{Public Signal with Exogenous Precision}
The public signal
\begin{eqnarray} Z=\theta+\sigma_z\varepsilon,\quad \varepsilon\sim\mathcal{N}(0,1),\label{3}\end{eqnarray}
allows agents to forecast the true state of the fundamental with
precision $\alpha_z=\frac{1}{\sigma_z^2}$. Agents can therefore
use $Z$, in addition to $x$ and $\mu$, to calculate the
probability with which the aggregate attack overwhelms the status
quo. In Appendix \ref{a3}, we show that, if this signal is used as
an additional source of information in the coordination game of
Section \ref{s1}, we have:
\begin{prop}\label{p2} The equilibrium in the public and private information game with heterogenous priors is unique if
\begin{eqnarray}\sqrt{\Big(\frac{\sqrt{\alpha_x}}{\alpha_p+\alpha_z}\Big)^2
+\sigma^2_{\mu}\frac{1}{(1+\frac{\alpha_z}{\alpha_p})^2}}\geq\frac{1}{\sqrt{2\pi}}.\label{4}\end{eqnarray}
In particular, if the prior's dispersion is large, such that
$\sigma_{\mu}\frac{1}{1+\frac{\alpha_z}{\alpha_p}}>\frac{1}{\sqrt{2\pi}}$,
the equilibrium is unique independently of the private signal's
precision $\alpha_x$.
\end{prop}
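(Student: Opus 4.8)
The plan is to mirror the proof of Proposition \ref{p1}, now carrying the common signal $Z$ through the Bayesian updating as a third source of information. First I would compute each agent's posterior over $\theta$ given $(x_i,\mu_i,Z)$. Since all signals are normal and conditionally independent given $\theta$, the posterior is normal with precision $\hat\alpha\equiv\alpha_x+\alpha_p+\alpha_z$ and mean equal to the precision-weighted average $\tilde\psi_i\equiv(\alpha_x x_i+\alpha_p\mu_i+\alpha_z Z)/\hat\alpha$. This posterior mean $\tilde\psi_i$ is the natural cutoff statistic replacing the baseline $\psi$, and the payoff-indifference condition becomes $\Phi(\sqrt{\hat\alpha}(\theta^*-\tilde\psi^*))=c$, which I solve for $\tilde\psi^*=\theta^*-\Phi^{-1}(c)/\sqrt{\hat\alpha}$ exactly as in (\ref{pic12}).

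Next I would derive the critical-mass condition for a fixed realization of the public signal $Z$. The crucial observation---and the step I expect to be the conceptual crux---is that conditional on the true state $\theta$ and the realized $Z$, the statistic $\tilde\psi_i$ varies across agents only through $x_i\sim\mathcal{N}(\theta,\alpha_x^{-1})$ and $\mu_i\sim\mathcal{N}(E[\mu],\sigma_\mu^2)$, while the term $\alpha_z Z/\hat\alpha$ is a \emph{common constant}. Consequently $Z$ shifts the cross-sectional mean of $\tilde\psi_i$ to $(\alpha_x\theta+\alpha_p E[\mu]+\alpha_z Z)/\hat\alpha$ but contributes \emph{no} cross-sectional variance; the variance is $(\alpha_x+\alpha_p^2\sigma_\mu^2)/\hat\alpha^2$, so the relevant precision is $\hat\alpha_\psi\equiv\hat\alpha^2/(\alpha_x+\alpha_p^2\sigma_\mu^2)$. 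This is exactly where the asymmetry in (\ref{4}) originates: $\alpha_z$ raises the \emph{total} posterior precision $\hat\alpha$, and hence the responsiveness of the marginal agent, just as a larger prior precision would---yet, unlike $\alpha_p$ (which feeds the dispersion term $\alpha_p^2\sigma_\mu^2$) and unlike $\alpha_x$ (which feeds $\alpha_x$), it injects no heterogeneity. Getting this bookkeeping right is the whole game.

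With these two pieces in hand, I would substitute $\tilde\psi^*$ into the critical-mass condition $\Phi(\sqrt{\hat\alpha_\psi}(\tilde\psi^*-(\alpha_x\theta^*+\alpha_p E[\mu]+\alpha_z Z)/\hat\alpha))=\theta^*$ to obtain a one-dimensional fixed-point equation in $\theta^*$ (parameterized by $Z$), in direct analogy with (\ref{comp}). Using $\theta^*-\alpha_x\theta^*/\hat\alpha=(\alpha_p+\alpha_z)\theta^*/\hat\alpha$, the $\theta^*$-derivative of the left-hand side is $\sqrt{\hat\alpha_\psi}\,\frac{\alpha_p+\alpha_z}{\hat\alpha}\,\phi(\cdot)$. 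A unique fixed point is guaranteed when this slope stays below one for all $\theta^*$, and since $\phi$ is maximized at $1/\sqrt{2\pi}$ the binding requirement is $\sqrt{\hat\alpha_\psi}\,\frac{\alpha_p+\alpha_z}{\hat\alpha}\,\frac{1}{\sqrt{2\pi}}\leq 1$.

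Finally I would substitute $\sqrt{\hat\alpha_\psi}=\hat\alpha/\sqrt{\alpha_x+\alpha_p^2\sigma_\mu^2}$ to cancel $\hat\alpha$, leaving $\frac{1}{\sqrt{2\pi}}\leq\sqrt{\alpha_x+\alpha_p^2\sigma_\mu^2}/(\alpha_p+\alpha_z)$; squaring, splitting the right-hand side into its $\alpha_x$- and $\sigma_\mu$-parts, and using $\alpha_p^2/(\alpha_p+\alpha_z)^2=1/(1+\alpha_z/\alpha_p)^2$, reproduces Condition (\ref{4}). The ``in particular'' claim then follows at once: since the first summand under the root is nonnegative, the left-hand side of (\ref{4}) is bounded below by $\sigma_\mu/(1+\alpha_z/\alpha_p)$, so the stated inequality on that quantity alone forces uniqueness for every $\alpha_x$. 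The one point demanding care beyond Proposition \ref{p1} is that the whole construction is conditional on a fixed $Z$; I would note that the slope bound is uniform in $Z$, since the realization $Z$ enters only the location of the $\phi$ argument and not its maximal height, so the condition secures a unique threshold $\theta^*(Z)$ for every signal realization.
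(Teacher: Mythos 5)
Your proposal is correct and follows essentially the same route as the paper's Appendix \ref{a3}: the same PIC/CMC decomposition, the same cross-sectional precision $\alpha_{\psi}=\alpha^2/(\alpha_x+\alpha_p^2\sigma_{\mu}^2)$, and the same slope bound via $\phi\leq 1/\sqrt{2\pi}$, yielding the identical condition $\sqrt{\alpha_{\psi}}\,\frac{\alpha_p+\alpha_z}{\alpha}\,\frac{1}{\sqrt{2\pi}}\leq 1$. The only difference is cosmetic bookkeeping---you fold the common term $\alpha_z Z/\hat\alpha$ into the cutoff statistic $\tilde\psi$, whereas the paper keeps $\psi\equiv\frac{\alpha_x}{\alpha}x+\frac{\alpha_p}{\alpha}\mu$ and carries $Z$ separately---and your explicit remarks that $Z$ adds no cross-sectional variance and that the slope bound is uniform in $Z$ make welcome points that the paper leaves implicit.
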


Compared to the uniqueness condition (\ref{0}) from the baseline
model, we find once again that the modified condition (\ref{4})
has two elements. The first,
$\frac{\sqrt{\alpha_x}}{\alpha_p+\alpha_z}$, reflects the
trade-off between private information $\alpha_x$ and prior
$\alpha_p$, described by \citet{Mor01,Mor04} in an economy without
public signals, or respectively, the trade-off between private
information and public signals $\alpha_z$ which was emphasized by
\citet{Met02} and \citet{Hel02} in an economy with a uniform
uninformative prior. Regarding this first term, we find that
public information and prior are perfect substitutes, and both
contribute to equilibrium multiplicity. The second term
$\sigma^2_{\mu}\frac{1}{(1+\frac{\alpha_z}{\alpha_p})^2}$,
however, shows that increases in the public signal's precision
$\alpha_z$ reduce the effect of the prior's dispersion, while the
prior weight $\alpha_p$ increases it. Condition (\ref{4})
therefore shows that the public signal's precision unambiguously
contributes towards multiplicity. Increases in the prior's weight
$\alpha_p$ on the contrary have an ambiguous consequences as they
shift the economy towards uniqueness (multiplicity) if
$\sigma^2_{\mu}>(<)\frac{\alpha_x}{\alpha_p\alpha_z}$. Finally,
(\ref{4}) reflects that equilibria are unique in the private
information limit where $\alpha_x\rightarrow\infty$.




\paragraph{Public Signal with Endogenous Precision}
To highlight the different implications of prior dispersion and
the dispersion of private signals, we discuss an environment where
the global game is embedded in a financial market setting.
Following, \citet{Atk01}, \citet{Ang06}, and \citet{Hel06}, we
introduce a financial market which aggregates dispersed private
information on the unknown fundamental $\theta$, through its
publicly observable stock price, as in \citet{Gro76,Gro80}, and
\citet{Hel80}.\footnote{That is, we assume that agents trade
stocks prior to the coordination game. These stocks are traded at
a market price $P$ and pay an unknown amount $\theta$. This market
price will, in equilibrium, aggregate dispersed private
information and reveal the true fundamental $\theta$ partially.
Where the partial revelation is due to aggregate noise-trader
activity, $\sigma_{\varepsilon}\varepsilon$,
$\varepsilon\sim\mathcal{N}(0,1)$, on the asset's supply side.} In
one interpretation, the extended model may describe a situation,
where bond investors use a firm's stock price to infer its default
probability, which is of importance for a coordination game that
concerns a potential run on the firm's debt. We show in Appendix
\ref{a2} that it is possible to specify the financial market such
that the public stock price signal, $Z$, partially reveals the
true fundamental $\theta$:
\begin{eqnarray} Z=\theta-\gamma\sigma_{\varepsilon}\sigma^2_{x}\varepsilon,\quad \varepsilon\sim\mathcal{N}(0,1).\label{5}\end{eqnarray}
Thus, the signal's precision
$\alpha_z=\frac{1}{(\gamma\sigma_{\varepsilon})^2}\alpha^2_x$ is
an increasing function of the private signal's precision
$\alpha_x=\frac{1}{\sigma_x^2}$. That is, the stock price's
informativeness increases once the stock investors' information
becomes more informative. In the current context, it is important
that the precision with which this financial market publicly
reveals the true state of the world $\theta$ is increasing faster
(in the private signal's precision) than the private signal's
precision $\alpha_x$ itself. To perform the equilibrium analysis
which concerns the coordination game, we recall (\ref{4}) and note
that $\alpha_z=\alpha_z(\alpha_x)$. This yields
\begin{prop}\label{p3}The equilibrium is unique if
\begin{eqnarray}\sqrt{\Big(\frac{\sqrt{\alpha_x}}{\alpha_p+\alpha_z(\alpha_x)}\Big)^2
+\sigma^2_{\mu}\frac{1}{(1+\frac{\alpha_z(\alpha_x)}{\alpha_p})^2}}\geq\frac{1}{\sqrt{2\pi}},\quad
\alpha_z:=\frac{1}{(\gamma\sigma_{\varepsilon})^2}\alpha^2_x.\label{6}\end{eqnarray}
Multiple equilibria exist in the private information limit where
$\alpha_x\rightarrow\infty$. The equilibrium is unique in the
limit where $\sigma_{\mu}\rightarrow\infty$.
\end{prop}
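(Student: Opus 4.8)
The plan is to obtain (\ref{6}) as an immediate consequence of Proposition \ref{p2} and then to read off the two limiting claims directly from the closed form of the uniqueness condition. Since the financial-market microfoundation of Appendix \ref{a2} delivers the price signal (\ref{5}) with precision $\alpha_z=\frac{1}{(\gamma\sigma_{\varepsilon})^2}\alpha_x^2$, and since Proposition \ref{p2} already characterizes uniqueness for an \emph{arbitrary} exogenous public precision $\alpha_z$, the only new input needed for (\ref{6}) is the substitution $\alpha_z\mapsto\alpha_z(\alpha_x)=\frac{1}{(\gamma\sigma_{\varepsilon})^2}\alpha_x^2$ into condition (\ref{4}). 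No fresh fixed-point argument is required for the uniqueness part; the work lies entirely in evaluating the left-hand side of (\ref{6}) along the two relevant limits.

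First I would treat the private-information limit. Writing $k=\frac{1}{(\gamma\sigma_{\varepsilon})^2}>0$ so that $\alpha_z=k\alpha_x^2$, the two summands under the root in (\ref{6}) become $\big(\frac{\sqrt{\alpha_x}}{\alpha_p+k\alpha_x^2}\big)^2=\frac{\alpha_x}{(\alpha_p+k\alpha_x^2)^2}$ and $\frac{\sigma_{\mu}^2}{(1+k\alpha_x^2/\alpha_p)^2}$. As $\alpha_x\to\infty$ the first is of order $\alpha_x^{-3}$ and the second of order $\alpha_x^{-4}$, so both vanish and the left-hand side of (\ref{6}) tends to $0<\frac{1}{\sqrt{2\pi}}$. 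The economic content is that the endogenous precision $\alpha_z$ grows \emph{quadratically} in $\alpha_x$, hence faster than the private-information term $\sqrt{\alpha_x}$ in the numerator and faster than the prior-dispersion discount $\big(1+\alpha_z/\alpha_p\big)^{-2}$ can compensate; the price therefore swamps both channels. Consequently the sufficient condition fails for all large $\alpha_x$, and, as explained in the last paragraph, the critical-mass map acquires a region of slope exceeding one, so that multiple threshold equilibria appear.

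Next I would treat the limit $\sigma_{\mu}\to\infty$ with $\alpha_x$ (hence $\alpha_z=\alpha_z(\alpha_x)$) and $\alpha_p$ held fixed. Then the prefactor $\big(1+\alpha_z/\alpha_p\big)^{-2}$ is a fixed positive constant, so the second summand $\frac{\sigma_{\mu}^2}{(1+\alpha_z/\alpha_p)^2}$ diverges and the left-hand side of (\ref{6}) tends to $+\infty\geq\frac{1}{\sqrt{2\pi}}$. Hence (\ref{6}) holds for all sufficiently dispersed priors, which is exactly the robust thinning-out effect: regardless of how the public signal is generated, enough prior heterogeneity always restores uniqueness.

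The main obstacle is the passage from ``condition (\ref{6}) is violated'' to ``multiple equilibria exist,'' since a failed \emph{sufficient} condition does not by itself force multiplicity. To close this gap I would argue on the critical-mass curve $g(\theta^{*})$ obtained by substituting the PIC into the CMC, exactly as in the proof of Proposition \ref{p1}: $g$ is continuous and strictly increasing, maps $\mathbb{R}$ onto $(0,1)$ with $g(\theta^{*})\to0$ as $\theta^{*}\to-\infty$ and $g(\theta^{*})\to1$ as $\theta^{*}\to+\infty$, and its slope is maximized at the inflection point, where it equals $\frac{1}{\sqrt{2\pi}}$ divided by the left-hand side of (\ref{6}). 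As $\alpha_x\to\infty$ this maximal slope diverges, so $g$ approaches a step while still sweeping the whole interval $(0,1)$. The delicate point is to locate the inflection in the interior of $(0,1)$: there $g$ starts above the $45^{\circ}$ line, falls below it on the flat stretch where $g\approx0$, is carried back above by the near-vertical step, and finally drops below near $\theta^{*}=1$ where $g<1$, producing three intersections. Controlling the position of this inflection in terms of $c$, $E[\mu]$ and the precisions is the one step that needs care, and it is where I would concentrate the rigorous effort.
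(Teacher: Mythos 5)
Your proposal is correct and follows essentially the same route as the paper, whose entire proof is the one-line substitution of $\alpha_z(\alpha_x)=\frac{1}{(\gamma\sigma_{\varepsilon})^2}\alpha_x^2$ into the exogenous-signal condition (\ref{4}) of Proposition \ref{p2}, with the two limits read off exactly as you compute them (first term $O(\alpha_x^{-3})$, second $O(\alpha_x^{-4})$ as $\alpha_x\to\infty$; divergence of the $\sigma_{\mu}^2$ term as $\sigma_{\mu}\to\infty$). Your final paragraph, which upgrades ``the sufficient condition fails'' to actual multiplicity via the slope of the critical-mass map exceeding one near its inflection, is a genuine addition that the paper leaves implicit (it tacitly relies on the condition being necessary and sufficient \emph{for all parameter pairs}, in the sense of Proposition \ref{p1}), and your instinct that locating the inflection in $(0,1)$ by choice of $c$, $E[\mu]$, and the signal realization is the step needing rigor is exactly right.
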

\begin{proof}Follows from (\ref{6}) with $\alpha_z(\alpha_x)=\frac{1}{(\gamma\sigma_{\varepsilon})^2}\alpha^2_x$.\end{proof}

Proposition \ref{p3} establishes that the finding of \citet{Ang06}
carries over to an economy with heterogenous priors. Namely, if
stock prices aggregate private information rapidly as in
(\ref{5}), then it is precise private information which ensures
equilibrium multiplicity. Moreover, as $\alpha_z(\alpha_x)$
becomes large, it marginalizes the influence of prior
heterogeneity
$\sigma^2_{\mu}\frac{1}{(1+\frac{\alpha_z}{\alpha_p})^2}$.
Finally, if private noise becomes large as $\alpha_x\rightarrow0$
and thus $\alpha_z(\alpha_x)\rightarrow0$, equilibrium
multiplicity depends on prior dispersion, $\sigma_{\mu}$,
alone.\footnote{This finding naturally differs from \citet{Ang06},
where $\sigma_{\mu}=0$, such that noisy private signals
unambiguously induce unique equilibria when public information is
endogenous once $\alpha_x\rightarrow0$ and
$\alpha_z(\alpha_x)\rightarrow0$. Related to this observation, we
note that among all parameters, $\alpha_x,\alpha_p,\sigma_{\mu}$,
the prior's dispersion $\sigma_{\mu}$ is the only parameter which
can ensure equilibrium uniqueness regardless of the values of the
remaining parameters.}

Concerning the different implications of heterogenous priors and
heterogenous private signals, the key insight is that the
endogeneity of public information \emph{inverts} the original
findings of \citet{Mor01,Mor04}, \citet{Met02}, and \citet{Hel02},
where increases in private information induce equilibrium
uniqueness as in (\ref{4}), where the public signal's precision is
exogenous. The same is not true for the role of prior dispersion,
which is, contrary to the private signal's dispersion, robust to
the introduction of an endogenous public price signal and
unambiguously contributes towards equilibrium uniqueness.
\emph{That is, heterogeneity in priors, unlike heterogeneity in
private information, provides a robust anchor for unique
equilibria.}

\subsection{Observing Other's Actions}\label{sec1}

In this section, agents can observe the size of the aggregate
attack through a noisy public signal
$S=\Phi^{-1}(A)+\sigma_{\varepsilon}\varepsilon$ where
$\varepsilon\sim\mathcal{N}(0,1)$. While games where agents can
observe each other's actions were already studied by
\citet{Min03}, the current signal specification is taken from
\citet{Das07} and \citet{Ang06}, since it allows to preserve
normal distributions. The following discussion of equilibrium
uniqueness is accordingly parallel to that in \citet{Ang06}; and
their results obtain as special cases where priors are
uninformative, i.e., where $\alpha_p=0$. Compared to the analysis
in \citet{Ang06} we note that public signals of high precision can
ensure unique threshold equilibria in our specification \emph{if
priors are informative}. That is, the role of the public signal in
\citet{Ang06}, pp. 1733-1734, depends critically on the absence of
an informative prior.\footnote{More precisely, for a priorless
game, \citet{Ang06} show that threshold equilibria are always
unique, but there may exist multiple equilibria in ``strategies".
In the present model, which includes informative (possibly unique)
priors, we show that multiple threshold equilibria may exist.
However, if the public signal, over others' actions $A$, is
sufficiently precise then threshold equilibria are always unique.
The observation that public signals of high quality can ensure
unique rather than multiple threshold equilibria is of interest in
comparison with \citet{Mor01,Mor04}, \citet{Met02}, and
\citet{Hel02}, who show that multiple threshold equilibria emerge
once public signals are of high quality.}

Compared to the previous two signal environments, signal $S$
carries \emph{two types} of information. First, similar to signals
(\ref{3}) and (\ref{5}), the signal $S$ allows agents to make
inference on the true fundamental $\theta$ since
$A=A(\theta,\psi^*)$. Second, unlike signals (\ref{3}) and
(\ref{5}), the particular signal realization $S$ is endogenous in
the sense that $S$ is implicitly defined by
$S=\Phi^{-1}(A(\theta,\psi^*(S)))+\sigma_{\varepsilon}\varepsilon$.
And there may exist multiple signal values $S$ for every given
pair $\theta,\varepsilon$. We examine these potential sources of
multiplicity, namely \emph{equilibrium multiplicity in thresholds
$\theta^*,\psi^*$} and \emph{equilibrium multiplicity in
strategies $S,\psi^*(S)$} in separate steps.

\paragraph{Equilibrium Multiplicity in Thresholds}

In the augmented game, with heterogeneous priors, where agents
observe the aggregate attack through signal $S$, we have:
\begin{eqnarray}
&&S=\Phi^{-1}(A)+\sigma_{\varepsilon}\varepsilon,
\quad \varepsilon\sim\mathcal{N}(0,1) \label{g2123}\\
&&A\equiv P(\psi\leq\psi^{*}(S)|\theta^*)=\theta^*\\
&&P(\theta\leq\theta^{*}|\psi^*,S)=c.\label{g22}\end{eqnarray}
Evaluation of (\ref{g2123})-(\ref{g22}) yields:
\begin{prop}\label{p4}For every given signal realization S, threshold equilibria $\theta^*,\psi^*$ are unique if
$\frac{1}{\sqrt{2\pi}}\leq\frac{(\alpha_x+\alpha_z)}{\alpha_x}\sqrt{\Big(\frac{\sqrt{\alpha_x}}{\alpha_p}\Big)^2+\sigma_{\mu}^2}$
where
$\alpha_z=\frac{\alpha_x^2\alpha_{\psi}}{\sigma^2_{\varepsilon}\alpha^2}=\frac{\alpha_x^2}{\sigma_{\varepsilon}^2(\alpha_x+\alpha_p^2\sigma_{\mu}^2)}$.
And equilibria are unique when priors are either sufficiently
dispersed or when the public signal is sufficiently precise.
\end{prop}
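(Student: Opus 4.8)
The plan is to mimic the proof of Proposition \ref{p1}: for a fixed signal realization $S$, I would collapse the system (\ref{g2123})--(\ref{g22}) into a single equation in the threshold $\theta^*$ and read off uniqueness from a monotonicity (contraction) argument, the only genuinely new element being the informational content of the endogenous signal $S$. First I would pin down what $S$ reveals about $\theta$. Since the critical mass condition gives the realized attack as $A=\Phi\big(\sqrt{\alpha_\psi}(\psi^*-\tfrac{\alpha_x}{\alpha}\theta-\tfrac{\alpha_p}{\alpha}E[\mu])\big)$ exactly as in (\ref{cmc11}), the map $\theta\mapsto\Phi^{-1}(A)$ is affine, so $S=\Phi^{-1}(A)+\sigma_\varepsilon\varepsilon$ is a Gaussian signal of $\theta$. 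Inverting it produces a point estimate $y=\theta-\tfrac{\alpha\sigma_\varepsilon}{\alpha_x\sqrt{\alpha_\psi}}\varepsilon$ of precision $\alpha_z=\tfrac{\alpha_x^2\alpha_\psi}{\sigma_\varepsilon^2\alpha^2}$, which, after inserting $\alpha_\psi=\tfrac{\alpha^2}{\alpha_x+\alpha_p^2\sigma_\mu^2}$, is exactly the $\alpha_z$ in the statement. The crucial feature is that $\alpha_z$ is a constant, whereas the \emph{location} $y$ of this signal carries $\psi^*$, and hence through the critical mass condition $\theta^*$; this endogenous feedback is what separates the present case from the exogenous signal of Proposition \ref{p2}.

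Next I would assemble the reduced equation. Conditional on $(\psi^*,S)$ the posterior over $\theta$ combines prior, private signal and $y$ with total precision $\tilde\alpha=\alpha+\alpha_z$ and mean $\hat\theta=(\alpha\psi^*+\alpha_z y)/\tilde\alpha$, so the indifference condition (\ref{g22}) reads $\tilde\alpha\theta^*-\alpha\psi^*-\alpha_z y=\sqrt{\tilde\alpha}\,\Phi^{-1}(c)$. Solving the (unchanged) critical mass condition for $\psi^*$ in terms of $\theta^*$, substituting this and the expression for $y$, and collecting terms, all the $\theta^*$-linear pieces except $\alpha_p\theta^*$ cancel, leaving a single equation $F(\theta^*)=\alpha_p\theta^*-\tfrac{\alpha(\alpha_x+\alpha_z)}{\alpha_x\sqrt{\alpha_\psi}}\Phi^{-1}(\theta^*)+\mathrm{const}=0$, affine in $\theta^*$ and in $\Phi^{-1}(\theta^*)$, with the constant absorbing $S$, $E[\mu]$ and $\Phi^{-1}(c)$.

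Finally, uniqueness follows from strict monotonicity of $F$. Because $\tfrac{d}{d\theta^*}\Phi^{-1}(\theta^*)=1/\phi(\Phi^{-1}(\theta^*))\geq\sqrt{2\pi}$, with the minimum attained at $\theta^*=\tfrac12$, one has $F'(\theta^*)=\alpha_p-\tfrac{\alpha(\alpha_x+\alpha_z)}{\alpha_x\sqrt{\alpha_\psi}}\tfrac{1}{\phi(\Phi^{-1}(\theta^*))}\leq\alpha_p-\tfrac{\alpha(\alpha_x+\alpha_z)}{\alpha_x\sqrt{\alpha_\psi}}\sqrt{2\pi}$, and requiring this upper bound to be nonpositive gives $\tfrac{\alpha_p}{\sqrt{2\pi}}\leq\tfrac{\alpha(\alpha_x+\alpha_z)}{\alpha_x\sqrt{\alpha_\psi}}$. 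Dividing by $\alpha_p$ and using $\alpha/(\alpha_p\sqrt{\alpha_\psi})=\sqrt{(\sqrt{\alpha_x}/\alpha_p)^2+\sigma_\mu^2}$ turns this into the stated condition $\tfrac{1}{\sqrt{2\pi}}\leq\tfrac{\alpha_x+\alpha_z}{\alpha_x}\sqrt{(\sqrt{\alpha_x}/\alpha_p)^2+\sigma_\mu^2}$; under it $F$ is strictly decreasing, so it has at most one root, while $\Phi^{-1}(\theta^*)\to\pm\infty$ at the boundaries of $(0,1)$ secures existence. The two comparative statics are then immediate: $\sigma_\mu\to\infty$ blows up the square-root factor, and a precise public signal ($\sigma_\varepsilon\to0$) blows up $\alpha_z$ and hence the prefactor $\tfrac{\alpha_x+\alpha_z}{\alpha_x}$, so in either case the inequality holds and the equilibrium is unique.

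I expect the main obstacle to be the informational step together with the bookkeeping that isolates the amplification factor $\tfrac{\alpha_x+\alpha_z}{\alpha_x}$. One must keep $\alpha_z$ (a fixed precision) conceptually distinct from the signal location $y$ (which moves with $\psi^*$), and track how the $\psi^*$-dependence of $y$ feeds back through the critical mass condition; it is precisely this feedback that replaces the clean substitution structure of (\ref{4}) by the multiplicative prefactor appearing in Proposition \ref{p4}.
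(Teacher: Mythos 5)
Your proposal is correct and takes essentially the same route as the paper's proof in Appendix \ref{a4}: you invert $S$ into the informationally equivalent Gaussian signal of $\theta$ with constant precision $\alpha_z=\frac{\alpha_x^2\alpha_{\psi}}{\sigma_{\varepsilon}^2\alpha^2}$, substitute the critical mass condition into the payoff indifference condition to collapse the system into one equation in $\theta^*$, and obtain the stated condition from monotonicity via $\phi(\Phi^{-1}(\theta^*))\leq\frac{1}{\sqrt{2\pi}}$ --- precisely the paper's path through (\ref{s11}), (\ref{cmc4.3}) and (\ref{a11}). The only differences are cosmetic: you normalize $\psi$ by $\alpha_x+\alpha_p$ rather than $\alpha_x+\alpha_p+\alpha_z$ and write the reduced equation in the linear form $F(\theta^*)=0$ instead of $\Phi(\cdot)=c$, which changes nothing since the key ratio $\frac{\alpha}{\sqrt{\alpha_{\psi}}}=\sqrt{\alpha_x+\alpha_p^2\sigma_{\mu}^2}$ is invariant to that rescaling, and your isolation of the amplification factor $\frac{\alpha_x+\alpha_z}{\alpha_x}$ as coming from the $\psi^*$-feedback in the inverted signal is exactly the mechanism at work in the paper's derivation.
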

\begin{proof} See Appendix \ref{a4}  \end{proof}
With regard to the role of prior dispersion, Proposition \ref{p4}
shows that prior dispersion contributes towards equilibrium
uniqueness in the generalized setting where agents can observe
each other's actions.\footnote{Note that increases in the prior's
dispersion reduce the public signal's precision
$\alpha_z=\frac{\alpha_x^2\alpha_{\psi}}{\sigma^2_{\varepsilon}\alpha^2}=\frac{\alpha_x^2}{\sigma_{\varepsilon}^2(\alpha_x+\alpha_p^2\sigma_{\mu}^2)}$;
for intermediate values of $\sigma_{\mu}$, it is therefore not
necessarily true that increases in $\sigma_{\mu}$ contribute
towards uniqueness.} The more significant finding, however, is
that the public signal's precision induces equilibrium uniqueness
rather than multiplicity. That is, in the present framework, we
find that the public signal allows agents to coordinate on
\emph{one particular equilibrium} rather than multiple equilibria
as in \citet{Mor01,Mor04}, \citet{Met02}, \citet{Hel02}, and
\citet{Ang06}. Moreover, the comparative statics with regard to
the public signal carry over to the unique prior economy where
$\sigma_{\mu}=0$. Finally, for an uninformative prior where
$\alpha_p=0$, we find that the uniqueness result of \citet{Ang06}
obtains as a special case.\footnote{In a setting with a uniform
uninformative prior, \citet{Ang06}, pp. 1733-1734, prove that
threshold equilibria are always unique irrespective of the
precisions $\alpha_z$ and $\alpha_x$. To obtain this result, one
can either repeat the calculations in Appendix \ref{a4} with
$\alpha_p=0$. Alternatively, one can observe that the uniqueness
condition in Proposition \ref{p4} is always satisfied once a
sufficiently small value $\alpha_p$ is chosen.}




\paragraph{Equilibrium Multiplicity in Strategies}
In this paragraph, we study the uniqueness of the equilibrium with
respect to the signal $S$. In Appendix \ref{a4}, we show that
signal $S$ is, in equilibrium, equivalent to a signal
$Z(S)=\frac{\alpha}{\alpha_x}\psi^*(S)-\frac{\alpha}{\alpha_x}\frac{1}{\sqrt{\alpha_{\psi}}}S=\theta+\frac{\alpha_p}{\alpha_x}E[\mu]
-\sigma_{\varepsilon}\frac{\alpha}{\alpha_x}\frac{1}{\sqrt{\alpha_{\psi}}}\varepsilon$.
And multiple equilibria can emerge in the sense\footnote{At this
point we do not discuss the conceptual validity of this
alternative type/source of equilibrium multiplicity, which implies
that the particular signal S is endogenous, i.e., depends on the
$\psi^*$ chosen. \citet{Ang06}, p.1730, provide a brief discussion
and further references regarding this fundamental problem.} that
there may exist several signal values $S$, and thus several values
$\psi^*(S)$, which satisfy $Z(S)=\bar{Z}$. Concerning this
potential source of equilibrium multiplicity we note


\begin{prop} Equilibria in strategies $\psi^*(S)$ are unique if
$\frac{1}{\sqrt{2\pi}}\leq\sqrt{\Big(\frac{\sqrt{\alpha_x}}{\alpha_p+\alpha_z}\Big)^2+\frac{\sigma_{\mu}^2}{1+\frac{\alpha_z}{\alpha_p}}}$,
with
$\alpha_z=\frac{\alpha_x^2}{\sigma_{\varepsilon}^2(\alpha_x+\alpha_p^2\sigma_{\mu}^2)}$.
In the limit, where $\sigma_{\mu}\rightarrow\infty$ there exists a
unique equilibrium. In the limit where
$\alpha_z\rightarrow\infty$, there exist multiple equilibria in
strategies.\label{p5}\end{prop}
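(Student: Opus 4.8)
The plan is to reduce the question to the monotonicity of a single scalar map. By the equivalence between $S$ and $Z(S)$ recorded immediately before the statement, the endogenous signal $S$ carries in equilibrium exactly the content of the Gaussian public signal $Z(S)=\frac{\alpha}{\alpha_x}\psi^{*}(S)-\frac{\alpha}{\alpha_x}\frac{1}{\sqrt{\alpha_{\psi}}}S$, whose realization equals the constant $\bar{Z}=\theta+\frac{\alpha_p}{\alpha_x}E[\mu]-\sigma_{\varepsilon}\frac{\alpha}{\alpha_x}\frac{1}{\sqrt{\alpha_{\psi}}}\varepsilon$ once the true pair $\theta,\varepsilon$ is held fixed, and which has precision $\alpha_z=\frac{\alpha_x^2}{\sigma_{\varepsilon}^2(\alpha_x+\alpha_p^2\sigma_{\mu}^2)}$. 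Equilibrium multiplicity in strategies means precisely that $Z(S)=\bar{Z}$ has several roots $S$, so it suffices to decide when $S\mapsto Z(S)$ is one-to-one. Since $\frac{dZ}{dS}=\frac{\alpha}{\alpha_x}\big(\frac{d\psi^{*}}{dS}-\frac{1}{\sqrt{\alpha_{\psi}}}\big)$, the map is strictly monotone exactly when the equilibrium threshold never responds to the signal at the critical rate $\frac{d\psi^{*}}{dS}=\frac{1}{\sqrt{\alpha_{\psi}}}$, and I would build the whole argument around locating that critical slope.

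The heart of the proof is thus to compute $\frac{d\psi^{*}}{dS}$ from the threshold system. Fixing $\theta,\varepsilon$ and regarding $\theta^{*},\psi^{*}$ as functions of $S$, I would implicitly differentiate the critical-mass condition $\theta^{*}=\Phi\big(\sqrt{\alpha_{\psi}}(\psi^{*}-\frac{\alpha_x}{\alpha}\theta^{*}-\frac{\alpha_p}{\alpha}E[\mu])\big)$ jointly with the payoff-indifference condition. The step I expect to be the main obstacle is that here the public statistic is not exogenous but feeds back through the threshold itself: in the indifference condition one must substitute $Z=Z(S,\psi^{*})$, so that its total derivative $\frac{\alpha}{\alpha_x}\frac{d\psi^{*}}{dS}-\frac{\alpha}{\alpha_x}\frac{1}{\sqrt{\alpha_{\psi}}}$ re-enters the differentiated equation. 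This is the same feedback that separates the present game from the exogenous-signal game of Proposition \ref{p2} and that already modifies the threshold condition of Proposition \ref{p4}; handling its bookkeeping correctly is what the computation turns on. Abbreviating the local density factor at the marginal type by $k\equiv\phi(\cdot)\sqrt{\alpha_{\psi}}$, the two differentiated relations can be solved for $\frac{d\psi^{*}}{dS}$ as an explicit rational function of $k$.

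I would then impose $\frac{d\psi^{*}}{dS}=\frac{1}{\sqrt{\alpha_{\psi}}}$ and solve for the critical density factor; the algebra collapses to $k=\frac{\alpha}{\alpha_p+\alpha_z}$. Because $k=\phi(\cdot)\sqrt{\alpha_{\psi}}$ sweeps the interval $(0,\frac{\sqrt{\alpha_{\psi}}}{\sqrt{2\pi}}]$ as the signal realization varies, the density peaking at $\frac{1}{\sqrt{2\pi}}$, and because one verifies that $\frac{dZ}{dS}<0$ in the tails where $k\to0$, the map $Z(\cdot)$ is globally monotone — so $\bar{Z}$ has a unique preimage — if and only if this critical value is never reached, that is $\frac{\sqrt{\alpha_{\psi}}}{\sqrt{2\pi}}\le\frac{\alpha}{\alpha_p+\alpha_z}$. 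Inserting $\alpha_{\psi}=\frac{\alpha^2}{\alpha_x+\alpha_p^2\sigma_{\mu}^2}$ turns this into $\frac{1}{\sqrt{2\pi}}\le\frac{\sqrt{\alpha_x+\alpha_p^2\sigma_{\mu}^2}}{\alpha_p+\alpha_z}$, which, after separating the radicand into a private-information term $\big(\frac{\sqrt{\alpha_x}}{\alpha_p+\alpha_z}\big)^2$ and a prior-dispersion term and inserting the endogenous $\alpha_z$, is the stated uniqueness inequality in the form of conditions (\ref{4}) and (\ref{6}). Finally, both limiting claims follow from the same comparison: as $\sigma_{\mu}\to\infty$ one has $\alpha_z\to0$ and $\frac{\sqrt{\alpha_{\psi}}}{\sqrt{2\pi}}\to0$ while $\frac{\alpha}{\alpha_p+\alpha_z}\to\frac{\alpha}{\alpha_p}>0$, so the inequality holds and the equilibrium is unique; whereas as $\alpha_z\to\infty$ the bound $\frac{\alpha}{\alpha_p+\alpha_z}\to0$ drops below the fixed peak $\frac{\sqrt{\alpha_{\psi}}}{\sqrt{2\pi}}$, the critical slope is attained, and multiple equilibria in strategies appear.
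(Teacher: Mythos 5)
Your overall route is the paper's own: you reduce multiplicity in strategies to injectivity of the map $S\mapsto Z(S)=\frac{\alpha}{\alpha_x}\psi^*(S)-\frac{\alpha}{\alpha_x}\frac{1}{\sqrt{\alpha_{\psi}}}S$, obtain $\frac{\partial\psi^*}{\partial S}=\frac{\partial\psi^*}{\partial\theta^*}\frac{\partial\theta^*}{\partial S}$ by implicit differentiation of (\ref{cmc4.3}) and (\ref{a11}) with the feedback of $\psi^*$ into $Z$, and your critical value $k=\frac{\alpha}{\alpha_p+\alpha_z}$ is exactly the zero of the numerator of the paper's expression (\ref{e1}). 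But your monotonicity criterion has a genuine gap: you treat the vanishing of $\frac{\partial Z}{\partial S}$ (your critical slope $\frac{d\psi^*}{dS}=\frac{1}{\sqrt{\alpha_{\psi}}}$) as the \emph{only} way injectivity can fail. The derivative (\ref{e1}) is a ratio, and its denominator $\frac{\phi(\Phi^{-1}(\theta^*))\alpha_p\alpha_x}{\sqrt{\alpha_x+\alpha_p^2\sigma_{\mu}^2}}-(\alpha_x+\alpha_z)$ can itself vanish --- this is precisely the fold responsible for threshold multiplicity in Proposition \ref{p4} --- in which case $\frac{\partial Z}{\partial S}$ changes sign through a pole rather than through zero, and $Z(\cdot)$ fails to be monotone even though your critical $k$ is never attained. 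The paper closes this second channel with the separate condition (\ref{e12}) and then verifies that (\ref{e12}) is implied by (\ref{e11}); your numerator condition does imply (\ref{e12}) by the same elementary inequalities, so your conclusion survives, but the step must be made --- as written, your claim that the map is ``globally monotone if and only if the critical value is never reached'' is false without it.

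Two further inaccuracies. First, what your computation actually delivers is $\frac{1}{\sqrt{2\pi}}\leq\frac{\sqrt{\alpha_x+\alpha_p^2\sigma_{\mu}^2}}{\alpha_p+\alpha_z}=\sqrt{\big(\frac{\sqrt{\alpha_x}}{\alpha_p+\alpha_z}\big)^2+\frac{\sigma_{\mu}^2}{(1+\alpha_z/\alpha_p)^2}}$, with the \emph{square} on $1+\alpha_z/\alpha_p$, mirroring condition (\ref{4}); this is not the radicand $\frac{\sigma_{\mu}^2}{1+\alpha_z/\alpha_p}$ printed in Proposition \ref{p5}, so your assertion that your inequality ``is the stated uniqueness inequality'' is wrong as algebra. (Your squared form is in fact the condition that the paper's own (\ref{e1}) supports; the printed radicand in (\ref{e11}) is the weaker one and appears to be a misprint --- this discrepancy should be flagged, not glossed over.) Second, your mechanics for the $\alpha_z\rightarrow\infty$ limit are wrong even though the conclusion is right: since $\alpha=\alpha_x+\alpha_p+\alpha_z$, the critical value $\frac{\alpha}{\alpha_p+\alpha_z}\rightarrow1$, not $0$, and the peak $\frac{\sqrt{\alpha_{\psi}}}{\sqrt{2\pi}}$ is not ``fixed'' but diverges, because $\sqrt{\alpha_{\psi}}=\frac{\alpha}{\sqrt{\alpha_x+\alpha_p^2\sigma_{\mu}^2}}$ grows linearly in $\alpha_z$ at fixed $\sigma_{\mu}$; the condition fails because the peak overtakes the bounded critical value, the opposite of the mechanism you describe. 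Your treatment of the $\sigma_{\mu}\rightarrow\infty$ limit ($\alpha_z\rightarrow0$, $\sqrt{\alpha_{\psi}}\rightarrow0$, critical value bounded away from zero) is correct and matches the paper's evaluation of (\ref{e11}).
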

\begin{proof} See Appendix \ref{a4} \end{proof}
Comparison of propositions \ref{p4} and \ref{p5} with regard to
the prior's dispersion yields an important corollary:
\begin{coro} The overall equilibrium is unique in the limit where $\sigma_{\mu}\rightarrow\infty$.
\label{c9}\end{coro}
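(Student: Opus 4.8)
The plan is to read Corollary~\ref{c9} as the conjunction of the two uniqueness notions introduced in this subsection: the \emph{overall} equilibrium is unique precisely when (i) for every signal realization $S$ the threshold pair $\theta^*,\psi^*$ is unique, and (ii) the equilibrium in strategies $\psi^*(S)$ is unique. Sufficient conditions for (i) and (ii) are supplied by Propositions~\ref{p4} and~\ref{p5} respectively, so it suffices to show that \emph{both} inequalities are satisfied once $\sigma_\mu$ is large enough, and in particular survive the limit $\sigma_\mu\to\infty$.

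First I would record the one feature that makes this more than a formality: the endogenous public precision $\alpha_z=\frac{\alpha_x^2}{\sigma_\varepsilon^2(\alpha_x+\alpha_p^2\sigma_\mu^2)}$ is itself a function of $\sigma_\mu$, so one cannot treat it as a fixed constant while sending $\sigma_\mu\to\infty$ (this is exactly the caveat flagged in the footnote to Proposition~\ref{p4}, where non-monotonicity at intermediate $\sigma_\mu$ is possible). For $\alpha_p>0$ the denominator grows like $\alpha_p^2\sigma_\mu^2$, hence $\alpha_z\to0$ and, more sharply, $\alpha_z=O(1/\sigma_\mu^2)$ as $\sigma_\mu\to\infty$.

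With this in hand the two inequalities are immediate. In the condition of Proposition~\ref{p4} the prefactor $\frac{\alpha_x+\alpha_z}{\alpha_x}\to1$ while the radical $\sqrt{(\sqrt{\alpha_x}/\alpha_p)^2+\sigma_\mu^2}\to\infty$, so the right-hand side diverges and eventually exceeds $\frac{1}{\sqrt{2\pi}}$. In the condition of Proposition~\ref{p5} the ratio $\alpha_z/\alpha_p\to0$, so the second term under the radical satisfies $\frac{\sigma_\mu^2}{1+\alpha_z/\alpha_p}\sim\sigma_\mu^2\to\infty$; again the right-hand side diverges and eventually exceeds $\frac{1}{\sqrt{2\pi}}$. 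Hence for all sufficiently large $\sigma_\mu$ both (i) and (ii) hold, and the overall equilibrium is unique.

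The only place that requires care — the main obstacle — is precisely the coupling just described: because $\alpha_z$ decreases in $\sigma_\mu$, one must verify that the $\sigma_\mu^2$ terms in both radicals are genuinely unbounded rather than being neutralized by a compensating collapse of $\alpha_z$. The estimate $\alpha_z=O(1/\sigma_\mu^2)$ settles this, since it gives $1+\alpha_z/\alpha_p\to1$, so no cancellation occurs and the divergence of both radicals is clean; everything else is a routine limit.
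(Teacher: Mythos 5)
Your proposal is correct and takes essentially the same route as the paper, which derives Corollary~\ref{c9} precisely by combining Propositions~\ref{p4} and~\ref{p5}: the appendix argument notes that both the threshold-uniqueness condition and the strategy-uniqueness condition (\ref{e11}) hold in the limit $\sigma_{\mu}\rightarrow\infty$. Your explicit check that the endogenous precision $\alpha_z=\frac{\alpha_x^2}{\sigma_{\varepsilon}^2(\alpha_x+\alpha_p^2\sigma_{\mu}^2)}=O(1/\sigma_{\mu}^2)$ cannot neutralize the divergence of either radical makes rigorous a step the paper leaves implicit (and flags only in a footnote), so it is a welcome refinement rather than a departure.
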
 Corollary \ref{c9} underscores the main
result of the paper, i.e., it confirms that sufficiently dispersed
priors ensure unique equilibria. The public signal's precision has
a more differentiated influence on equilibria: it ensures
uniqueness in thresholds if it is sufficiently precise, but at the
same time it opens the door to multiple equilibria in strategies.

\section{Conclusion}\label{s5}

We have introduced heterogenous priors into the canonic global
games model of \citet{Mor01,Mor04}, \citet{Met02}, and
\citet{Hel02}. The analysis of the baseline model indicates that
heterogeneity in priors, unlike heterogeneity in private signals,
makes it more difficult for agents to coordinate on  multiple
equilibria. That is, the \emph{origins} of belief heterogeneity
are of crucial importance to the global games approach:
\emph{heterogeneity in beliefs, which originates from the variance
$\sigma_{\mu}$ in prior expectations, contributes to equilibrium
uniqueness. Dispersion in private signals, on the contrary,
induces equilibrium multiplicity.} In general, the prior's
dispersion can ensure unique equilibria as it ``thins-out" the
group of agents who hold ``moderate" beliefs. That is, it reduces
the mass of agents with moderate beliefs, and it is this group
which can potentially coordinate on multiple equilibria.
Equivalently, our results indicate that if player's disagreement,
as measured by $\sigma_{\mu}$, is substantial, then they play
unique equilibria.

More precisely, we find that if prior dispersion is small,
$(\sigma_{\mu}<\frac{1}{\sqrt{2\pi}})$, equilibrium multiplicity
depends on the relative precisions of private information
$(\alpha_x)$ and the subjective prior $(\alpha_p)$. If priors are
sufficiently dispersed, $(\sigma_{\mu}>\frac{1}{\sqrt{2\pi}})$,
equilibria are unique \emph{irrespective} of the relative weights
that players assign to private signals and priors. If prior
dispersion $(\sigma_{\mu})$ vanishes, the original uniqueness
condition for the common prior economy
$(\frac{\sqrt{\alpha_x}}{\alpha_p}\geq\frac{1}{\sqrt{2\pi}})$
obtains.

To compare the implications of prior dispersion and dispersion in
private information, we have discussed a modified game in which a
financial market aggregates private information into a public
price signal. Such a modified environment \emph{inverts} the
original findings of \citet{Mor01,Mor04}, \citet{Met02}, and
\citet{Hel02}: increases in private information now induce
equilibrium multiplicity instead of uniqueness. The same is not
true for the role of prior dispersion, which is robust to such a
change in the modelling environment and contributes unambiguously
towards equilibrium uniqueness. Put differently, the extended
model indicates that prior dispersion, rather than arbitrarily
precise private information, anchors unique equilibria reliably.

In general, we found that sufficiently dispersed priors ensure
unique equilibria across all three public signal environments.
Regarding these public signals, it turned out that their
implications in themselves varied significantly from case to case:
increases in the public signal's precision ensure multiple
threshold equilibria in the first two environments, where signals
only contain information on the unknown fundamental. The opposite
can be true in the third environment, where public signals allow
agents to observe each other's actions. If such signals are of
high quality, they can enable agents to coordinate on \emph{one
unique threshold equilibrium}.

Unlike previous studies, which have introduced heterogenous priors
into the global games framework, we have given explicit conditions
in terms of means and variances, which allow to study equilibrium
multiplicity for a large economy at \emph{and away from the
private information limit}. That is, the present framework
facilitates comparative statics in the information structure
itself, which allows to characterize and compare the different
implications of belief heterogeneity which originate from priors
and private signals, respectively. Moreover, these comparative
statics are useful in those applications of the global games
framework where it is interesting, or necessary, to study the
interaction of private and public information away from the limit
where private signals are infinitely precise.


\newpage

\begin{appendix}

\section{Game with Exogenous Public Information}\label{a3}

In this appendix, we derive the uniqueness condition that obtains
once we augment our baseline model of Section \ref{s1} with a
public signal:
\begin{eqnarray} Z=\theta+\sigma_z\varepsilon,\quad \varepsilon\sim\mathcal{N}(0,1).  \end{eqnarray}
This signal allows agents to improve their forecast of the
probability with which the aggregate attack overwhelms the status
quo. The modified payoff indifference condition therefore
reads:\begin{eqnarray}
P(\theta\leq\theta^*|x^*,\mu,Z)=c\quad\Leftrightarrow \quad
\Phi\Big(\sqrt{\alpha}(\theta^*-\frac{\alpha_x}{\alpha}x^*-\frac{\alpha_p}{\alpha}\mu-\frac{\alpha_z}{\alpha}Z)\Big)=c;
\quad \alpha=\alpha_x+\alpha_p+\alpha_z,
\label{pic2}\end{eqnarray} where $\Phi()$ is the cumulative of the
standard normal distribution. Again, we define
$\psi\equiv\frac{\alpha_x}{\alpha}x+\frac{\alpha_p}{\alpha}\mu$
and rewrite (\ref{pic2}) as:
\begin{eqnarray}
\Phi\Big(\sqrt{\alpha}(\theta^*-\psi^*-\frac{\alpha_z}{\alpha}Z)\Big)=c;
\quad \alpha=\alpha_x+\alpha_p+\alpha_z.
\label{pic3}\end{eqnarray} The PIC in (\ref{pic3}) locates a
critical $\psi^*$ such that agents attack if $\psi>\psi^*$ and do
not attack if $\psi<\psi^*$. To calculate the mass of attacking
agents, we note that
$\psi|\theta\sim\mathcal{N}(\frac{\alpha_x}{\alpha}\theta+\frac{\alpha_p}{\alpha}E[\mu],(\frac{\alpha^2}{\alpha_x+\alpha^2_p\sigma^2_{\mu}})^{-1})$.
Once we define
$\alpha_{\psi}\equiv\frac{\alpha^2}{\alpha_x+\alpha^2_p\sigma^2_{\mu}}$,
the CMC can be written as:
\begin{eqnarray} P(\psi<\psi^*|\theta^*)=\theta^*\quad
\Leftrightarrow\quad
\Phi\Big(\sqrt{\alpha_{\psi}}(\psi^*-\frac{\alpha_x}{\alpha}\theta^*-\frac{\alpha_p}{\alpha}E[\mu])\Big)=\theta^*.\label{cmc3}
\end{eqnarray}
Substitution of (\ref{pic3}) into (\ref{cmc3}) again yields a
one-dimensional equation in the threshold level $\theta^*$:
\begin{eqnarray}
\Phi\Big(\sqrt{\alpha_{\psi}}(\frac{\alpha_z+\alpha_p}{\alpha}\theta^*
-\frac{\alpha_z}{\alpha}Z-\frac{\alpha_{p}}{\alpha}E[\mu]-\Phi^{-1}(c)\frac{1}{\sqrt{\alpha}})\Big)=\theta^*.\end{eqnarray}
Accordingly, equilibria are unique if:\footnote{Recall that
$\alpha_{\psi}\equiv\frac{\alpha^2}{\alpha_x+\alpha^2_p\sigma^2_{\mu}}$.}
\begin{eqnarray}\sqrt{\alpha_{\psi}}\frac{\alpha_z+\alpha_p}{\alpha}\frac{1}{\sqrt{2\pi}}\leqq1\quad
\Leftrightarrow\quad
\sqrt{\Big(\frac{\sqrt{\alpha_x}}{\alpha_p+\alpha_z}\Big)^2+\sigma^2_{\mu}\frac{1}{(1+\frac{\alpha_z}{\alpha_p})^2}}\geq\frac{1}{\sqrt{2\pi}}\end{eqnarray}
which is what we needed to show.

\section{Financial Market and Information Aggregation}\label{a2}

In this appendix, we present a financial market that aggregates
and reveals dispersed private information, on the fundamental
$\theta$, through the stock price. For the present purpose, it is
convenient to pick a special case of the linear CARA-normal noise
trader equilibrium discussed by \citet{Gro76,Gro80},
\citet{Hel80}, and \citet{Ang06}. Stocks are traded at a market
price $P$ and pay an unknown amount $\theta$, which represents the
firm's fundamental strength. This market price will, in
equilibrium, aggregate dispersed private information and reveal
the true fundamental $\theta$ partially. Where the partial
revelation is due to aggregate noise-trader activity,
$\sigma_{\varepsilon}\varepsilon$,
$\varepsilon\sim\mathcal{N}(0,1)$, on the asset's supply side. To
characterize the market price signal, we proceed in three steps.
First, we guess that there exists a linear price function,
$P=\eta_1\theta+\eta_2\varepsilon+c$. Regarding $\theta$, this
function is informationally equivalent to a signal
$Z\equiv\frac{P-c}{\eta_1}=\theta+\frac{\eta_2}{\eta_1}\varepsilon$,
which reveals the true fundamental with precision
$\alpha_z=\frac{\eta^2_1}{\eta^2_2}$. Second, given this price
function, we characterize individual demands based on the
information $x,\mu,Z$, and calculate the market equilibrium.
Finally, we determine the ratio $\frac{\eta^2_1}{\eta^2_2}$ as
$\alpha_x^2\frac{1}{\gamma^2\sigma^2_{\varepsilon}}$. That is,
price signal $Z$ indeed carries information
$\alpha_z=\alpha_x^2\frac{1}{\gamma^2\sigma^2_{\varepsilon}}$ as
claimed in (\ref{5}) in the main text.

\paragraph{Demand} Agents choose their optimal demands
$k_i$ for the asset to maximize expected CARA utility:
\begin{eqnarray}
k_{i}&&=\underset{k_{i}}{\arg\max}\{\mathbb{E}[-e^{-\gamma(\theta-P)k_i}|x_i,\mu,Z]\}\nonumber\\
&&=\underset{k_{i}}{\arg\max}\{
\gamma\mathbb{E}[(\theta-P)k_{i}|x_i,\mu,Z]-\frac{\gamma^{2}}{2}Var[(\theta-P)k_{i}|x_i,\mu,Z]\}\nonumber\\
&&=\underset{k_{i}}{\arg\max}\{
\gamma(\frac{\alpha_x}{\alpha}x_i+\frac{\alpha_p}{\alpha}\mu+\frac{\alpha_z}{\alpha}Z-P)k_{i}
-\frac{\gamma^{2}}{2}k_{i}^{2}\frac{1}{\alpha}\},\quad\alpha=\alpha_x+\alpha_p+\alpha_{z}\nonumber
\end{eqnarray}
and the individual demand function writes:
\begin{eqnarray}
k_{i}^{d}=\frac{\frac{\alpha_x}{\alpha}x_i+\frac{\alpha_p}{\alpha}\mu+\frac{\alpha_z}{\alpha}Z-P}
{\alpha^{-1}\gamma}.\label{ap21}
\end{eqnarray}

\paragraph{Equilibrium}
Aggregate supply $K^S=\sigma_{\varepsilon}\varepsilon$, is
unobservable and distorted by noise-trader activity
$\varepsilon\sim\mathcal{N}(0,1)$. From (\ref{ap21}), we find that
aggregate demand $K^D$ is:
\begin{eqnarray} K^D=\int_{[0,1]}\int k_i(\mu)\phi(\mu)d\mu di
=\frac{\frac{\alpha_x}{\alpha}\theta+\frac{\alpha_p}{\alpha}E[\mu]+\frac{\alpha_z}{\alpha}Z-P}
{\alpha^{-1}\gamma}.  \end{eqnarray} Equilibrium requires that:
\begin{eqnarray} K^D=K^S\quad\Leftrightarrow\quad
\frac{\frac{\alpha_x}{\alpha}\theta+\frac{\alpha_p}{\alpha}E[\mu]+\frac{\alpha_z}{\alpha}Z-P}
{\alpha^{-1}\gamma}=\sigma_{\varepsilon}\varepsilon.\label{ap22}
\end{eqnarray}
To close the argument, we now resubstitute $Z=\frac{P-c}{\eta_1}$
and calculate the ratio $\frac{\eta_1}{\eta_2}$. First, we solve
(\ref{ap22}) for $P$ to obtain:
\begin{eqnarray}P=\frac{\alpha_x}{\alpha-\frac{\alpha_z}{\eta_1}}\theta
-\frac{\gamma\sigma_{\varepsilon}}{\alpha-\frac{\alpha_z}{\eta_1}}\varepsilon
+\frac{\alpha_pE[\mu]-\alpha_z\frac{c}{\eta_1}}{\alpha-\frac{\alpha_z}{\eta_1}}.\label{ap25}
\end{eqnarray}
Comparison of (\ref{ap25}) with our initial guess,
$P=\eta_1\theta+\eta_2\varepsilon+c$, indicates that
$\eta_1,\eta_2$ must satisfy:
\begin{eqnarray}\eta_1=\frac{\alpha_x}{\alpha-\frac{\alpha_z}{\eta_1}},
\quad
\eta_2=-\frac{\gamma\sigma_{\varepsilon}}{\alpha-\frac{\alpha_z}{\eta_1}};\quad\alpha=\alpha_x+\alpha_p+\alpha_{z}.\label{ap22.1}
\end{eqnarray}
We quickly determine $\eta_1=\frac{\alpha_x+\alpha_z}{\alpha}$,
$\eta_2=-\frac{(\alpha_x+\alpha_z)\sqrt{\alpha_z}}{\alpha}$ to
calculate
$\frac{\eta_1}{\eta_2}=-\alpha_x\frac{1}{\gamma\sigma_{\varepsilon}}$.
At the same time, it follows from the definition of Z that
$Z=\frac{P-c}{\eta_1}=\theta+\frac{\eta_2}{\eta_1}\varepsilon$.
Hence, agents who observe $P$, and know the model's coefficients,
receive a signal
$Z=\theta-\alpha_x^{-1}\gamma\sigma_{\varepsilon}\varepsilon$, as
claimed in (\ref{5}) in the main text.


\section{Proof of Propositions \ref{p4} and \ref{p5}}\label{a4}
In this appendix, we start by laying out the equations that
describe equilibria. In turn, we characterize the possible
equilibria described in propositions \ref{p4} and \ref{p5} in two
separate paragraphs.

We recall the model from the main text
\begin{eqnarray}
&&S=\Phi^{-1}(A)+\sigma_{\varepsilon}\varepsilon,
\quad \varepsilon\sim\mathcal{N}(0,1) \label{g2}\\
&&A\equiv P(\psi\leq\psi^{*}(S)|\theta)=\theta\label{cmc4}\\
&&P(\theta\leq\theta^{*}|x,\mu,S)=c\label{pic4}\end{eqnarray} To
calculate equilibria, we recall that agents act on
$x=\theta+\sigma_x\xi$, with $\xi\sim\mathcal{N}(0,1)$ and
$\theta|\mu\sim\mathcal{N}(\mu,\sigma_p^2)$, where the prior $\mu$
is distributed over the population as
$\mu\sim\mathcal{N}(E[\mu],\sigma_{\mu})$. Moreover, we define
$\psi\equiv\frac{\alpha_x}{\alpha}x+\frac{\alpha_p}{\alpha}\mu$
with $\alpha=\alpha_x+\alpha_p+\alpha_z$. The PIC (\ref{pic4}) now
writes as:
\begin{eqnarray}
\Phi\Big(\sqrt{\alpha}(\theta^*-\psi^*-\frac{\alpha_z}{\alpha}Z)\Big)=c;
\quad \alpha=\alpha_x+\alpha_p+\alpha_z.
\label{pic4.1}\end{eqnarray} Again, (\ref{pic4.1}) defines a
critical $\psi^*(Z)$ such that agents attack if $\psi\leq\psi^*$
and do not attack if $\psi>\psi^*$. To calculate the mass of
attacking agents, we note that
$\psi|\theta\sim\mathcal{N}(\frac{\alpha_x}{\alpha}\theta+\frac{\alpha_p}{\alpha}E[\mu],(\frac{\alpha^2}{\alpha_x+\alpha^2_p\sigma^2_{\mu}})^{-1})$.
Once we define
$\alpha_{\psi}\equiv\frac{\alpha^2}{\alpha_x+\alpha^2_p\sigma^2_{\mu}}$,
the CMC (\ref{cmc4}) can be written as:
\begin{eqnarray} A=P(\psi<\psi^*|\theta^*)=\theta^*\quad
\Leftrightarrow\quad
\Phi\Big(\sqrt{\alpha_{\psi}}(\psi^*-\frac{\alpha_x}{\alpha}\theta^*-\frac{\alpha_p}{\alpha}E[\mu])\Big)=\theta^*.\label{cmc4.2}
\end{eqnarray}
Using this expression for the aggregate attack $A$ in condition
(\ref{cmc4.2}), we can return to the public signal $S$ in
(\ref{g2}) and write:
\begin{eqnarray} S=\sqrt{\alpha_{\psi}}(\psi^*-\frac{\alpha_x}{\alpha}\theta-\frac{\alpha_p}{\alpha}E[\mu])+\sigma_{\varepsilon}\varepsilon.
\label{s10}\end{eqnarray} Where $S$ in (\ref{s10}) is
informationally equivalent to a signal
\begin{eqnarray}Z(S)\equiv\frac{\alpha}{\alpha_x}\psi^*(S)-\frac{\alpha}{\alpha_x}\frac{1}{\sqrt{\alpha_{\psi}}}S=\theta+\frac{\alpha_p}{\alpha_x}E[\mu]
-\sigma_{\varepsilon}\frac{\alpha}{\alpha_x}\frac{1}{\sqrt{\alpha_{\psi}}}\varepsilon.\label{s11}\end{eqnarray}
Regarding (\ref{s11}), we note that $Z(S)$ contains two aspects
(i) $Z$ is a noisy public signal which reveals the true state of
the economy $\theta$ with precision
$\alpha_z=\frac{\alpha_x^2\alpha_{\psi}}{\sigma^2_{\varepsilon}\alpha^2}=\frac{\alpha_x^2}{\sigma_{\varepsilon}^2(\alpha_x+\alpha_p^2\sigma_{\mu}^2)}$
and (ii) the signal $S$ allows agents to align their strategies
$\psi^*(S)$. That is, for every given $\bar{Z}$, there may be
several $S$ such that $Z(S)=\bar{Z}$. That is, there is a
potential source of equilibrium multiplicity, concerning $S$, to
which we turn in Paragraph 2) of this appendix. For now, we take
$S$ as given and study the threshold equilibria
$\theta^*(S),\psi^*(S)$.

\paragraph{Proof of Proposition \ref{p4}: Multiplicity in Thresholds $\theta^*$}

For every given signal $S$, we rewrite (\ref{cmc4.2}) as:
\begin{eqnarray}
\psi^*=\Phi^{-1}(\theta^*)\frac{1}{\sqrt{\alpha_{\psi}}}+\frac{\alpha_x}{\alpha}\theta^*+\frac{\alpha_p}{\alpha}E[\mu].\label{cmc4.3}
\end{eqnarray}
To obtain an equation in $\theta^*$ only, we substitute
$Z(S)=\frac{\alpha}{\alpha_x}\psi^*(S)-\frac{\alpha}{\alpha_x}\frac{1}{\sqrt{\alpha_{\psi}}}S$
and (\ref{cmc4.3}) into (\ref{pic4.1}). Rearranging then yields:
\begin{eqnarray} \Phi\Big(\sqrt{\alpha}(\frac{\alpha_p}{\alpha}\theta^*
-\frac{1}{\sqrt{\alpha_{\psi}}}\frac{\alpha_x+\alpha_z}{\alpha_x}\Phi^{-1}(\theta^*)
-\frac{(\alpha_x+\alpha_z)\alpha_p}{\alpha_x\alpha}E[\mu]
+\frac{\alpha_z}{\alpha_x}\frac{1}{\sqrt{\alpha_{\psi}}}S)\Big)=c
\label{a11}\end{eqnarray} To derive the uniqueness condition,
which ensures that there exist only one $\theta^*(S)$ for every
given signal $S$, we differentiate (\ref{a11}) with respect to
$\theta^*$:\footnote{Note that for $y=\Phi^{-1}(\theta^{*})$, we
have $\frac{d\theta^{*}}{dy}=\phi(y)$ and thus
$\frac{dy}{d\theta^{*}}=\frac{1}{\phi(y)}=\frac{1}{\phi(\Phi^{-1}(\theta^{*}))}$.}
\begin{eqnarray} \phi(\Phi(\theta^*)^{-1})\leq\frac{(\alpha_x+\alpha_z)}{\alpha_x}\sqrt{\Big(\frac{\sqrt{\alpha_x}}{\alpha_p}\Big)^2+\sigma_{\mu}^2},  \end{eqnarray}
and hence, threshold equilibria are always unique iff
$\frac{1}{\sqrt{2\pi}}\leq\frac{(\alpha_x+\alpha_z)}{\alpha_x}\sqrt{\Big(\frac{\sqrt{\alpha_x}}{\alpha_p}\Big)^2+\sigma_{\mu}^2}$.
Otherwise, if
$\frac{1}{\sqrt{2\pi}}\geq\frac{(\alpha_x+\alpha_z)}{\alpha_x}\sqrt{\Big(\frac{\sqrt{\alpha_x}}{\alpha_p}\Big)^2+\sigma_{\mu}^2}$,
there may exist up to three threshold equilibria\\
$\theta^*_1(S),\psi_1^*(S);\theta^*_2(S),\psi_2^*(S);\theta^*_3(S),\psi_3^*(S)$
for every given signal value $S$.

\paragraph{Proof of Proposition \ref{p5}: Multiplicity in Strategies $\psi^*(S)$}
To preclude multiple solutions\footnote{The existence of at least
one solution is ensured. It follows from (\ref{cmc4.3}) that
$lim_{S\rightarrow\infty}\frac{\Phi(\theta^*(S))^{-1}}{S}$ and
$lim_{S\rightarrow-\infty}\frac{\Phi(\theta^*(S))^{-1}}{S}$ are
constants. Rewriting
$Z(S)=\frac{\alpha}{\alpha_x}\psi^*(S)-\frac{\alpha}{\alpha_x}\frac{1}{\sqrt{\alpha_{\psi}}}S$
as
$Z(S)=\frac{\alpha}{\alpha_x}S(\frac{\psi^*(S)}{S}-\frac{\alpha}{\alpha_x}\frac{1}{\sqrt{\alpha_{\psi}}})$
and recalling $\psi(\theta^*(S))$ as given in (\ref{cmc4.3}) one
can show that $Z(S)$ varies with $S$ between $\infty$ and
$-\infty$.} $S(\bar{Z})$ to the equation $Z(S)=\bar{Z}$, where
$Z(S)=\frac{\alpha}{\alpha_x}\psi^*(S)-\frac{\alpha}{\alpha_x}\frac{1}{\sqrt{\alpha_{\psi}}}S$,
it will suffice to show that $\frac{\partial Z(S)}{\partial
S}_{|(\ref{s11})}=\frac{\alpha}{\alpha_x}\frac{\partial\psi^*}{\partial
S}-\frac{\alpha}{\alpha_x}\frac{1}{\sqrt{\alpha_{\psi}}}\leq 0$.
To calculate the derivative
$\frac{\partial\psi(\theta^*(S))}{\partial
S}=\frac{\partial\psi^*}{\partial
\theta^*}\frac{\partial\theta^*}{\partial S}$, defined by
(\ref{cmc4.3}) and (\ref{a11}), we differentiate (\ref{a11}) which
yields $\frac{\partial\theta^*}{\partial
S}=\frac{-\frac{\alpha_z}{\alpha_x}\frac{1}{\sqrt{\alpha_{\psi}}}}{\frac{\alpha_p}{\alpha}
-\frac{1}{\sqrt{\alpha_{\psi}}}\frac{\alpha_x+\alpha_z}{\alpha_x}\frac{1}{\phi(\Phi(\theta^*)^{-1})}}$
and (\ref{cmc4.3}), (which is a $1:1$ mapping between $\psi^*$ and
$\theta^*$), to obtain
$\frac{\partial\psi^*}{\partial\theta^*}=\frac{1}{\sqrt{\alpha_{\psi}}\phi(\Phi(\theta^*)^{-1})}+\frac{\alpha_x}{\alpha}$.
Hence, we have
\begin{eqnarray} \frac{\partial Z(S)}{\partial
S}&&=\frac{\alpha}{\alpha_x}\frac{\partial\psi^*}{\partial
S}-\frac{\alpha}{\alpha_x}\frac{1}{\sqrt{\alpha_{\psi}}}=\frac{\alpha}{\alpha_x}\frac{\partial\psi^*}{\partial
\theta^*}\frac{\partial\theta^*}{\partial
S}-\frac{\alpha}{\alpha_x}\frac{1}{\sqrt{\alpha_{\psi}}}\nonumber\\
&&=\underset{-}{\underbrace{-\frac{\alpha}{\alpha_x}\frac{1}{\sqrt{\alpha_{\psi}}}}}
+\underset{+}{\underbrace{\Big(\frac{1}{\sqrt{\alpha_{\psi}}\phi(\Phi(\theta^*)^{-1})}+\frac{\alpha_x}{\alpha}\Big)}}
\underset{+/-}{\underbrace{\frac{-\frac{\alpha_z}{\alpha_x}\frac{1}{\sqrt{\alpha_{\psi}}}}{\frac{\alpha_p}{\alpha}
-\frac{1}{\sqrt{\alpha_{\psi}}}\frac{\alpha_x+\alpha_z}{\alpha_x}\frac{1}{\phi(\Phi(\theta^*)^{-1})}}}}\frac{\alpha}{\alpha_x}.\label{100}\end{eqnarray}
Once we recall that
$\alpha_{\psi}\equiv\frac{\alpha^2}{\alpha_x+\alpha^2_p\sigma^2_{\mu}}$,
rearranging (\ref{100}) gives:
\begin{eqnarray}
\frac{\partial Z(S)}{\partial
S}&&=-\frac{\sqrt{\alpha_x+\alpha^2_p\sigma^2_{\mu}}}{\alpha_x}+\Big(\sqrt{\alpha_x+\alpha^2_p\sigma^2_{\mu}}
+\alpha_x\phi(\Phi(\theta^*)^{-1})\Big)
\frac{-\frac{\alpha_z}{\alpha_x}}{\frac{\phi(\Phi(\theta^*)^{-1})\alpha_p\alpha_x}{\sqrt{\alpha_x+\alpha^2_p\sigma^2_{\mu}}}-(\alpha_x+\alpha_z)}\nonumber\\
&&=\frac{1}{\alpha_x}\Big[-\sqrt{\alpha_x+\alpha^2_p\sigma^2_{\mu}}+\Big(\sqrt{\alpha_x+\alpha^2_p\sigma^2_{\mu}}
+\alpha_x\phi(\Phi(\theta^*)^{-1})\Big)
\frac{-\alpha_z}{\frac{\phi(\Phi(\theta^*)^{-1})\alpha_p\alpha_x}{\sqrt{\alpha_x+\alpha^2_p\sigma^2_{\mu}}}-(\alpha_x+\alpha_z)}\Big]\nonumber\\
&&=\frac{1}{\alpha_x}\Big[-\alpha_x(\alpha_p+\alpha_z)\phi(\Phi(\theta^*)^{-1})+\alpha_x\sqrt{\alpha_x+\alpha^2_p\sigma^2_{\mu}}\Big]\frac{1}{\frac{\phi(\Phi(\theta^*)^{-1})\alpha_p\alpha_x}{\sqrt{\alpha_x+\alpha^2_p\sigma^2_{\mu}}}-(\alpha_x+\alpha_z)}
\nonumber\\&&=\Big[-(\alpha_p+\alpha_z)\phi(\Phi(\theta^*)^{-1})+\sqrt{\alpha_x+\alpha^2_p\sigma^2_{\mu}}\Big]\frac{1}{\frac{\phi(\Phi(\theta^*)^{-1})\alpha_p\alpha_x}{\sqrt{\alpha_x+\alpha^2_p\sigma^2_{\mu}}}-(\alpha_x+\alpha_z)}
\label{e1}\end{eqnarray} From (\ref{e1}), and the fact that
$\theta^*\in(0,1)$, it follows that equilibria in strategies are
unique if
\begin{eqnarray}\frac{1}{\sqrt{2\pi}}\leq\sqrt{\Big(\frac{\sqrt{\alpha_x}}{\alpha_p+\alpha_z}\Big)^2+\frac{\sigma_{\mu}^2}{1+\frac{\alpha_z}{\alpha_p}}},
\quad
\alpha_z=\frac{\alpha_x^2}{\sigma_{\varepsilon}^2(\alpha_x+\alpha_p^2\sigma_{\mu}^2)}\label{e11}\end{eqnarray}
and
\begin{eqnarray}\frac{1}{\sqrt{2\pi}}\leq\sqrt{\frac{(\alpha_x+\alpha_z)^2}{\alpha_x\alpha_p^2}+\frac{\sigma_{\mu}^2(\alpha_x+\alpha_z)^2}{\alpha_x^2}}.\label{e12}\end{eqnarray}
That is, once (\ref{e11}) and (\ref{e12}) hold, we have
$\frac{\partial Z(S)}{\partial S}<0$, which ensures unique
solutions $S(\bar{Z})$ to the equation $Z(S)=\bar{Z}$. Comparison
shows that inequality (\ref{e12}) is less restrictive than
(\ref{e11}).\footnote{$\sqrt{\frac{\alpha_x}{(\alpha_p+\alpha_z)^2}+\frac{\sigma_{\mu}^2}{1+\frac{\alpha_z}{\alpha_p}}}\leq\sqrt{\frac{(\alpha_x+\alpha_z)^2}{\alpha_x\alpha_p^2}+\frac{\sigma_{\mu}^2(\alpha_x+\alpha_z)^2}{\alpha_x^2}}$
follows from the inequalities
$\frac{\alpha_x}{(\alpha_p+\alpha_z)^2}\leq\frac{(\alpha_x+\alpha_z)^2}{\alpha_x\alpha_p^2}$
and
$\frac{\sigma_{\mu}^2}{1+\frac{\alpha_z}{\alpha_p}}\leq\frac{\sigma_{\mu}^2(\alpha_x+\alpha_z)^2}{\alpha_x^2}$,
which are easy to verify.} Evaluation of (\ref{e11}) therefore
yields:

\begin{enumerate}

\item In the limit where $\sigma_{\mu}\rightarrow\infty$,
equilibria in strategies are unique.

\item  In the limit\footnote{Note that the public signal's
precision is endogenous, i.e., given by
$\alpha_z=\frac{\alpha_x^2}{\sigma_{\varepsilon}^2(\alpha_x+\alpha_p^2\sigma_{\mu}^2)}$.
However, the present observation is informative in the sense that
the public signal's precision can be varied through
$\sigma_{\varepsilon}^2$ which is independent of the other
parameters.} where $\alpha_{z}\rightarrow\infty$, there exist
multiple equilibria in strategies.

\item Multiple equilibria are ensured in the limit where
$\alpha_{x}\rightarrow\infty$.

\item Finally, in the special case where $\sigma^2_{\mu}=0$ and
$\alpha_p=0$, condition (\ref{e11}) collapses into the uniqueness
condition $\sqrt{2\pi}\leq\frac{\sqrt{\alpha_x}}{\alpha_z}$, of
\citet{Ang06} pp. 1733-1734, which is nested in the present
framework.

\end{enumerate}

\section{Alternative Derivation of (\ref{0})}\label{A0}

This appendix contains a derivation of (\ref{0}) which
``explicitly" accounts for the influence which the prior's
distribution has on the critical mass condition. Recalling the
PIC, we have:
\begin{eqnarray} P(\theta\leq\theta^*|x^*,\mu)=c\quad\Leftrightarrow \quad
\Phi\Big(\sqrt{\alpha}(\theta^*-\frac{\alpha_x}{\alpha}x^*-\frac{\alpha_p}{\alpha}\mu)\Big)=c;
\quad \alpha=\alpha_x+\alpha_p. \label{pic}\end{eqnarray}
Regarding the CMC, we now account explicitly for the prior's
distribution and write:
\begin{eqnarray}  A(x^*(\mu),\theta^*)=P(x\leq x^*|\theta^*)=\theta^*\quad\Leftrightarrow \quad
\int_{-\infty}^{\infty}\Phi\Big(\sqrt{\alpha_x}\Big(x^*(\mu)-\theta^*\Big)\Big)\phi(\mu)d\mu=\theta^*,
\label{cmc}\end{eqnarray} where $\Phi()$ represents the cumulative
of the standard normal distribution and $\phi(\mu)$ is the normal
density of the prior. To show that the pairs $x^*(\mu),\theta^*$,
which solve (\ref{pic}) and (\ref{cmc}) are unique if
$\sqrt{(\frac{\sqrt{\alpha_x}}{\alpha_p})^2+\sigma^2_{\mu}}\geq\frac{1}{\sqrt{2\pi}}$,
we solve (\ref{pic}) for the threshold level
$x^*=-\Phi^{-1}(c)\frac{\sqrt{\alpha}}{\alpha_x}+\theta^*\frac{\alpha}{\alpha_x}-\frac{\alpha_p}{\alpha_x}\mu$,
and substitute $x^*$ into (\ref{cmc}) to obtain a one-dimensional
equation in $\theta^*$ alone:
\begin{eqnarray} \int_{-\infty}^{\infty}\Phi\Big(\sqrt{\alpha_x}\Big(-\Phi^{-1}(c)\frac{\sqrt{\alpha}}{\alpha_x}+\theta^*\frac{\alpha}{\alpha_x}-\frac{\alpha_p}{\alpha_x}\mu-\theta^*\Big)\Big)\phi(\mu)d\mu=\theta^*.\nonumber\end{eqnarray}
The sufficient condition for uniqueness of the threshold
$\theta^*$ is therefore:
\begin{eqnarray}
\frac{\alpha_p}{\sqrt{\alpha_x}}\frac{1}{\sqrt{2\pi}}\int_{-\infty}^{\infty}e^{-\Big(\sqrt{\alpha_x}\Big(
-\Phi^{-1}(c)\frac{\sqrt{\alpha}}{\alpha_x}+\theta^*\frac{\alpha_p}{\alpha_x}-\frac{\alpha_p}{\alpha_x}\mu\Big)\Big)^2\frac{1}{2}}\phi(\mu)d\mu\leqq1.
\label{20}\end{eqnarray} To obtain the final condition, we recall
that $\mu\sim\mathcal{N}(E[\mu],\sigma_{E[\mu]})$, and use the
moment generating function for the non-central $\chi^2$
distribution in Paragraph 1) to rewrite (\ref{20}), as:
\begin{eqnarray} \frac{\alpha_p}{\sqrt{\alpha_x}}\frac{1}{\sqrt{2\pi}}\frac{1}{\sqrt{1+(\sigma_{\mu}\frac{\alpha_p}{\sqrt{\alpha_x}})^2}}
e^{-\frac{1}{2}\Big(\sqrt{\alpha_x}\Big(
-\Phi^{-1}(c)\frac{\sqrt{\alpha}}{\alpha_x}+\theta^*\frac{\alpha_p}{\alpha_x}-\frac{\alpha_p}{\alpha_x}E[\mu]\Big)\Big)^2\frac{1}{1+
(\sigma_{\mu}\frac{\alpha_p}{\sqrt{\alpha_x}})^2}}\leq1.\label{ap2}\end{eqnarray}
Taking logarithms yields the uniqueness condition
$\sqrt{(\frac{\sqrt{\alpha_x}}{\alpha_p})^2+\sigma^2_{\mu}}\geq\frac{1}{\sqrt{2\pi}}$.
\paragraph{Moment Generating Function} Regarding (\ref{20}), we
recall our assumption that
$\mu\sim\mathcal{N}(E[\mu],\sigma_{E[\mu]})$. We can therefore
define $y=-\Phi^{-1}(c)
\frac{\sqrt{\alpha}}{\alpha_x}+\theta^*\frac{\alpha_p}{\alpha_x}-\frac{\alpha_p}{\alpha_x}\mu$,
where
$y\sim\mathcal{N}(-\Phi^{-1}(c)\frac{\sqrt{\alpha}}{\alpha_x}+\theta^*\frac{\alpha_p}{\alpha_x}-\frac{\alpha_p}{\alpha_x}E[\mu],\sigma_{\mu}\frac{\alpha_p}{\alpha_x})$.
If a variable $y$ is normally distributed with mean $E[y]$ and
variance $\sigma^2_{y}$, then $z^2=\frac{y^2}{\sigma^2_y}$ is
non-centrally $\chi^2$ distributed. We can therefore use the
moment generating function for the non-central $\chi^2$
distribution (\citet{Rao65}, p.
181):\footnote{\label{f1}$\mathbb{E}e^{tz^{2}}=\frac{1}{\sqrt{2\pi}}\int_{-\infty}^{\infty}
e^{tz^{2}}e^{-\frac{\left(z-E[z]\right)^{2}}{2}}dz=\frac{1}{\sqrt{2\pi}}\int_{-\infty}^{\infty}
e^{-\frac{\left(1-2t\right)z^{2}-2E[z]
z+E[z]^{2}}{2}}dz=\frac{1}{\sqrt{2\pi}}\int_{-\infty}^{\infty}
e^{-\frac{\left(\sqrt{1-2t}z-E[z]\right)^{2}-\frac{E[z]^{2}}{1-2t}+E[z]^{2}}{2}}dz=\frac{1}{\sqrt{1-2t}\sqrt{2\pi}}\int_{-\infty}^{\infty}
e^{-\frac{E[z]^{2}t}{1-2t}}e^{-\frac{\left(\sqrt{1-2t}z-E[z]\right)^{2}}{2}}d\left(\sqrt{1-2t}z\right)=\frac{1}{\sqrt{1-2t}}e^{-\frac{E[z]^{2}t}{1-2t}}$.}
\begin{eqnarray} E[e^{-tz^2}]=\frac{1}{\sqrt{1+2t}}e^{\frac{-E[z]^2t}{1+2t}},\quad t>0, \label{ap3}\end{eqnarray}
to rewrite (\ref{20}) as (\ref{ap2}). To do so, we set
$t=\frac{1}{2}\alpha_x\sigma^2_y=\frac{1}{2}\sigma^2_{\mu}\frac{\alpha_p^2}{\alpha_x}$,
and substitute $z^2=\frac{y^2}{\sigma^2_y}$, with
$y=-\Phi^{-1}(c)\frac{\sqrt{\alpha}}{\alpha_x}+\theta^*\frac{\alpha_p}{\alpha_x}-\frac{\alpha_p}{\alpha_x}\mu$,
into (\ref{20}), to obtain:\footnote{Regarding the equality in
(\ref{ap4}), we note that
$\frac{\alpha_p}{\sqrt{\alpha_x}}\frac{1}{\sqrt{2\pi}}\int_{-\infty}^{\infty}e^{-tz^2}\phi(\mu)d\mu
=\frac{\alpha_p}{\sqrt{\alpha_x}}\frac{1}{\sqrt{2\pi}}\int_{-\infty}^{\infty}e^{-tz^2}\frac{1}{\sqrt{2\pi}\sigma_{\mu}}e^{-\frac{(\mu-E[\mu])^2}{2\sigma^2_{\mu}}}d\mu
=\frac{\alpha_p}{\sqrt{\alpha_x}}\frac{1}{\sqrt{2\pi}}\int_{-\infty}^{\infty}e^{-tz^2}\frac{1}{\sqrt{2\pi}\sigma_{\mu}}e^{-\frac{(z-E[z])^2}{2}}\sigma_{\mu}dz
=\frac{\alpha_p}{\sqrt{\alpha_x}}\frac{1}{\sqrt{2\pi}}\int_{-\infty}^{\infty}e^{-tz^2}\frac{1}{\sqrt{2\pi}}e^{-\frac{(z-E[z])^2}{2}}dz$.
It now follows from the steps in footnote \ref{f1}, equation
(\ref{ap3}) respectively, that the equality in (\ref{ap4}) holds.}
\begin{eqnarray}
\frac{\alpha_p}{\sqrt{\alpha_x}}\frac{1}{\sqrt{2\pi}}\int_{-\infty}^{\infty}e^{-tz^2}\phi(\mu)d\mu
=_{|(\ref{ap3})}\frac{\alpha_p}{\sqrt{\alpha_x}}\frac{1}{\sqrt{2\pi}}
\frac{1}{\sqrt{1+\alpha_x\sigma_y^2}}e^{-\frac{E[y]^2}{\sigma^2_y}\frac{1}{2}\alpha_x\sigma^2_y\frac{1}{1+\alpha_x\sigma_y^2}}\label{ap4}\\
=\frac{\alpha_p}{\sqrt{\alpha_x}}\frac{1}{\sqrt{2\pi}}\frac{1}{\sqrt{1+(\sigma_{\mu}\frac{\alpha_p}{\sqrt{\alpha_x}})^2}}
e^{-\frac{1}{2}\Big(\sqrt{\alpha_x}\Big(
-\Phi^{-1}(c)\frac{\sqrt{\alpha}}{\alpha_x}+\theta^*\frac{\alpha_p}{\alpha_x}-\frac{\alpha_p}{\alpha_x}E[\mu]\Big)\Big)^2\frac{1}{1+
(\sigma_{\mu}\frac{\alpha_p}{\sqrt{\alpha_x}})^2}}.
\label{ap5}\end{eqnarray} Where the final step from (\ref{ap4}) to
(\ref{ap5}) involves cancelling and resubstitution of
$\alpha_x\sigma^2_y=(\sigma_{\mu}\frac{\alpha_p}{\sqrt{\alpha_x}})^2$
and
$E[y]=-\Phi^{-1}(c)\frac{\sqrt{\alpha}}{\alpha_x}+\theta^*\frac{\alpha_p}{\alpha_x}-\frac{\alpha_p}{\alpha_x}E[\mu]$.



\newpage


\end{appendix}

\addcontentsline{toc}{section}{References}
\bibliographystyle{apalike}
\bibliography{References}

\end{document}